\newtheorem{assumption}{Assumption}
\newtheorem{corollary}{Corollary}
\newtheorem{proposition}{Proposition}
\newtheorem{lemma}{Lemma}
\theoremstyle{remark}
\newtheorem{remark}{Remark}
\newtheorem{example}{Example}
\newcommand{\EE}{\mathbb{E}}
\newcommand{\QQ}{\mathbb{Q}}
\newcommand{\RR}{\mathbb{R}}
\newcommand{\Ee}{\mathcal{E}}
\newcommand{\Hh}{\mathcal{H}}
\newcommand{\Nn}{\mathcal{N}}
\newcommand{\Tt}{\mathcal{T}}
\newcommand{\R}{\mathrm{R}}
\newcommand{\F}{\mathrm{F}}
\newcommand{\m}{\mathrm{m}}
\newcommand{\VIX}{\mathrm{VIX}}
\author{Blanka Horvath\footnote{Blanka Horvath gratefully acknowledges financial support from the SNSF Early Postdoc. Mobility grant~165248}\\
Department of Mathematics, King's College London \\and Imperial College London
\\
b.horvath@imperial.ac.uk
 \and Antoine Jacquier \\
 Department of Mathematics, Imperial College London\\
and Baruch College, CUNY\\
a.jacquier@imperial.ac.uk
\and Peter Tankov\footnote{Part of this research was completed while
  Peter Tankov was visiting the Department of Mathematics in Imperial College London. 
  He would like to thank the Department of Mathematics, the CFM-Imperial Institute of Quantitative Finance, 
  and the Centre National de Recherche Scientifique, France, 
  for making this visit possible. 
  The research of Peter Tankov was also supported by the chair `Financial Risks' sponsored 
  by Soci\'et\'e G\'en\'erale.}\\
ENSAE ParisTech\\
peter.tankov@polytechnique.org
}
\date{\today}
\title{Volatility options in rough volatility models}
\begin{document}
\maketitle
\begin{abstract}
We discuss the pricing and hedging of volatility options in some rough volatility models. 
First, we develop efficient Monte Carlo methods and asymptotic approximations for
computing option prices and hedge ratios in models where
log-volatility follows a Gaussian Volterra process. While providing a good fit for European
options, these models are unable to reproduce the VIX option smile
observed in the market, and are thus not suitable for VIX products. To
accommodate these, we introduce the class of modulated Volterra processes, 
and show that they successfully capture the VIX smile. 

\vspace{0.2cm}

\noindent \textbf{2010 }\textit{Mathematics Subject Classification}: 60G15, 60G22, 91G20, 91G60, 91B25\\
\noindent \textbf{Keywords: }rough volatility, VIX smile, Monte Carlo, Volterra process
\end{abstract}

\section{Introduction}
In the recent years, rough stochastic volatility models in
which the trajectories of volatility are less regular than those of
the standard Brownian motion, have gained popularity among academics and practitioners. 
As shown in~\cite{bayer2016pricing, gatheral2014volatility}, replacing standard Brownian
motion by its (rough) fractional counterpart in volatility models allows to capture and explain crucial phenomena observed both in volatility time series and in the implied volatility of option prices. 
Since then, rough volatility models have become the go-to models capable of reproducing stylised facts of financial markets and of providing a unifying theory with implications branching across financial disciplines.
A growing number of research contributions has brought about
justifications for this modelling choice, rooting from market
microstructural considerations~\cite{omar2016microstructural},
to short-term calibration of the SPX smile~\cite{ALV07, BayerRoughSkew, Fukasawa, HJL2017},
hedging~\cite{Larsson, Euch1, Fukasawatalk}
up to its potential (explored in \cite{jacquier2017vix}) to provide the sought-after parsimonious model capable of jointly handling SPX and VIX derivatives; an aim that has been a central driving factor of research in volatility modelling over the past decade \cite{Akdogan, bargourleipold, BayerDMR, BergomiSmileDynamicsII, BergomiSmileDynamicsIII, Buhler, GuyonNutz, martiniessvi, OuldAly}.

From a practical perspective a natural question arises: 
What does the mantra of rough volatility mean for a trader hedging his positions? 
Due to the non-Markovian nature of the fractional driver, 
hedging under rough volatility poses a delicate challenge making even the very definition of hedging strategies difficult.
In particular, partial differential equations can no longer be used, and simulation is the only available route so far. 
Despite the availability of efficient Monte Carlo schemes~\cite{bennedsen2017hybrid, HJMTrees, mccrickerd2017turbocharging},
pricing and model calibration in rough volatility models remain time consuming;
this heavy simulation procedure can be bypassed 
for affine rough volatility models~\cite{Larsson, euch2016characteristic, GKR18, GJRSHeston}. 

We focus here on the pricing and hedging of volatility options in rough volatility models. 
First, we show that by focusing on the forward variance instead of the instantaneous
volatility, one recovers the martingale framework and in particular
the classical martingale representation property of option prices. 
This makes it possible to compute the hedge ratios, and although the model is non-Markovian, 
in many cases options can be hedged with a finite number of liquid assets, as in the classical setting. 
Our second objective is to assess the performance of rough volatility options for the calibration of VIX option smiles. 
We confirm numerically and theoretically the observation of~\cite{bayer2016pricing} that lognormal rough volatility models are unable to calibrate VIX smiles because the VIX index is very close to
lognormal in these models. 
To accommodate the VIX smiles we therefore extend the class of lognormal models 
by adding volatility modulation through an independent stochastic factor in the Volterra integral. 
The independence of this additional factor preserves part of the analytical tractability of the lognormal setting,
and we are therefore able to develop approximate option pricing and calibration algorithms based on Fourier transform techniques. 
Using real VIX implied volatility data, we show that this new class of models is able to fully capture the skew of VIX options. 

The rest of the paper is structured as follows. 
In Section~\ref{toy.sec} we focus on a toy example where the log volatility follows a Gaussian process, 
and consider an option written on the instantaneous volatility. 
In spite of the absence of any Markovian structure, perfect hedging with a single risky asset is possible, 
and the option price is given by the Black-Scholes formula. 
Armed with this knowledge, in Section~\ref{volterra.sec}, we consider more realistic VIX index options in lognormal volatility models, and again, show that perfect hedging is possible. 
Although explicit formulas for option prices are not available in this case, 
we propose a very efficient Monte Carlo algorithm, and show that the Black-Scholes formula still
gives a good approximation to the option price. 
A drawback, however, is that lognormal volatility models are unable to capture the smile observed in the VIX option market,
so in Section~\ref{modulated.sec} we propose a new class of models to include stochastic volatility modulation, for which we develop efficient calibration strategies, and test them on real market data.

\section{A toy example: instantaneous forward variance in lognormal volatility models}\label{toy.sec}
We first consider here a simple example, namely pricing options on the instantaneous forward variance in 
lognormal (rough or not) volatility models, and show that despite the absence of Markovianity for the
volatility, pricing reduces to the Black-Scholes framework.
We assume that the instantaneous volatility process is given by        
$$
\sigma_t = \Xi e^{X_t},
$$
where $X$ is a centered Gaussian process on $\RR$ under the risk-neutral
probability, and~$\Xi$ a strictly positive constant.
For all $s\geq 0$, let $\mathcal F^0_s := \sigma(X_r,r\leq s)$, 
and $\mathcal F_s := \cap_{s<t} \mathcal F^0_t$. 
The interest rate is taken to be zero. 
Fix a time horizon~$T$, let $Z_t(T) := \EE[X_T|\mathcal F_t]$,
so that~$(Z_t(T))_{t\geq 0}$ is a Gaussian martingale and thus a process with independent increments~\cite[Theorem II.4.36]{jacod2013limit}), completely characterised by the function 
$$
c(t) := \EE[Z_t(T)^2] = \EE[\EE[X_T|\mathcal F_t]^2]. 
$$
If we assume in addition that $c(\cdot)$ is continuous then~$(Z_t(T))_{t\geq 0}$ is almost surely continuous.
Using the total variance formula, the forward variance can be characterised as
$$
\xi_t := \EE[\sigma_T^2|\mathcal F_t] = \Xi^2\EE[e^{2X_T}|\mathcal F_t]
 = \Xi^2 e^{2\EE[X_T|\mathcal F_t]+2 \text{Var}[X_T|\mathcal F_t]}
 =  \Xi^2 \exp\left\{2 (Z_t(T) + \EE[X_T^2] - c(t))\right\}.
$$
The time-$t$ price of a Call option expiring at $T_0<T$ on the instantaneous forward variance 
is given by
$P_{t}:=\EE[(\xi_{T_0}-K)^+ |\mathcal F_t]$.
Note that $(\xi_t)_{t\geq 0}$ is a continuous lognormal martingale with 
$\EE[\xi_{T_0}|\mathcal F_t] = \xi_t$
and, by the total variance formula,
$$
\text{Var}[\log \xi_{T_0}|\mathcal F_t] = 4\text{Var}[\EE[X_{T}|\mathcal F_{T_0}]|\mathcal F_t]= 4(c(T_0) - c(t)).
$$
In other words, $P_t = P(t,\xi_t)$, where $P$ is a deterministic function given by
$$
P(t,x) = \EE\left[\left(x e^{Y - \frac{1}{2}\text{Var}(Y)}-K\right)^+\right],
$$
and~$Y$ a centered Gaussian random variable with variance $4(c(T_0) - c(t))$.
Black-Scholes formula then yields
$P_t = \xi_t \Nn(d^1_t) - K \Nn(d^2_t)$,
where $\Nn$ is the standard Normal distribution function, and 
$$
d^{1,2}_t = \frac{\frac{1}{2}\log\frac{\xi_t}{K}\pm (c(T_0) - c(t))}{\sqrt{c(T_0) - c(t)}}.
$$
Applying It\^o's formula ($\xi_t$ is a continuous martingale!) 
and keeping in mind the martingale property of the option price, we obtain
$$
dP_t = \Nn(d^1_t) d\xi_t.
$$
Therefore, the forward variance option may be hedged perfectly by a portfolio containing the instantaneous variance swap and the risk-free asset. 
This happens because the forward variance process is a time-inhomogeneous geometric Brownian motion 
and therefore a Markov process in its own filtration. 
In the rest of this paper we show that perfect hedging with a finite number of assets is possible
for more complex products. 
For reasons of analytical tractability, we focus on the class of Gaussian processes
 which may be represented in the form of integrals with respect to a finite-dimensional standard Brownian motion, called Volterra processes. 
These processes allow, in particular, to easily introduce correlation between stock price and volatility.

\section{Lognormal rough Volterra stochastic volatility models}\label{volterra.sec}
The previous section was a simple framework and only considered options on the instantaneous forward variance.
We now dive deeper into the topic, and consider, still in the context of lognormal (rough) volatility models, 
options on the integrated forward variance, in particular VIX options. 
To do so, we consider volatility processes of the form
\begin{equation}\label{volterra.eq}
\sigma_t =  \Xi(t) \exp\{X_t\} 
\qquad\text{with}\qquad 
X_t = \int_{0}^t g(t,s)^\top dW_s,
\end{equation}
where $W$ is a $d$-dimensional standard Brownian motion
with respect to the filtration $\mathbb F\equiv (\mathcal F_t)_{t\in \RR}$, 
and~$g$ is a kernel satisfying the integrability condition
\begin{equation}\label{intg.eq}
\int_{0}^t \|g(t,s)\|^2 ds <\infty,\quad \text{for all }t\geq 0.
\end{equation}
The Gaussian process~$X$ in~\eqref{volterra.eq} is a Gaussian Volterra process. 
The representation given here is rather general since it is a particular case of the so-called canonical representation 
of Gaussian processes~\cite[Paragraph VI.2]{hida93gaussian}. 
Every continuous Gaussian process satisfying certain regularity assumptions admits such a representation
~\cite[Theorem 4.1]{hida93gaussian}.
Here, $\Xi(\cdot)$ is a {locally square integrable} deterministic function enabling the exact calibration of the initial variance curve.\footnote{The exact formulation from~\cite{bayer2016pricing} uses the Dol\'eans-Dade exponential instead of the simple exponential. 
The two expressions are equivalent, as the additional deterministic term can be included in~$\Xi$.}
The Mandelbrot-van Ness formulation~\cite{Mandelbrot} of fractional Brownian motion requires the integral to start from~$-\infty$ instead of~$0$.
Since the volatility process is taken conditional on the pricing time zero, 
the two formulations are in fact equivalent, and~$\Xi$ takes into account the past history 
of the process.
The formulation~\eqref{volterra.eq} extends the so-called rough Bergomi model introduced in~\cite{bayer2016pricing}.
The rough Bergomi model
corresponding to a one-dimensional Brownian motion~$W$ and a function~$g$ of the form
\begin{equation}\label{eq:g}
g(t,s) = \alpha(t-s)^{H-\frac{1}{2}},
\qquad\text{for }s\in [0,t),
\qquad\text{with}\qquad
\alpha = 2\nu\sqrt{\frac{\Gamma(3/2-H)}{\Gamma(H+1)\Gamma(2-2H)}},
\end{equation}
where $\nu>0$ is the volatility of volatility 
and $H \in (0,1)$ the Hurst parameter of the fractional Brownian motion. 
This kernel~$g(\cdot)$ clearly satisfies the integrability condition~\eqref{intg.eq}.

Our goal here is to develop the theory and provide numerical
algorithms for pricing and hedging options in generic models of the form~\eqref{volterra.eq}. 
Empirical analysis of forward volatility curves
with the aim of choosing the adequate number of factors $d$ and the
suitable shapes of the kernel function $g$ is the topic of our ongoing
research. 
Similarly to the toy example of the previous section, we introduce the
martingale framework by considering the conditional expectation
process, which is given, for any $t\leq u$, by
$$
Z_t(u): = \EE[X_u|\mathcal F_t] =  \int_{0}^t g(u,s)^\top dW_s.
$$
Therefore the forward variance $\xi_t(u):=\EE[\sigma_u^2|\mathcal F_t]$ has the explicit martingale dynamics
\begin{equation}\label{eq:DynamicsXi}
d\xi_t(u) = 2\xi_t(u)  g(u,t)^\top dW_t,
\qquad\text{for } t\leq u. 
\end{equation}

We are interested here in pricing an option with pay-off at time $T$ given by 
\begin{equation}\label{eq:fPayoff}
f\left(\frac{1}{\Theta}\int_{T}^{T+\Theta} \xi_{T}(u) du\right),
\end{equation}
for some time horizon $\Theta$.
Since the value of the VIX index at time~$T$ can be computed via the continuous-time monitoring formula
\begin{equation}\label{eq:VIXFormula}
\VIX_{T} := \sqrt{\frac{1}{\Theta}\int_{T}^{T+\Theta} \xi_{T}(u) du},
\end{equation}
with $\Theta$ being one month, the Call option on the VIX corresponds to 
$f(x) = (\sqrt{x}-K)^+$. 
The time-$t$ price of such an option is given by 
$$
P_t := \EE\left[f\left(\frac{1}{\Theta}\int_{T}^{T+\Theta} \xi_{T}(u) du\right) \Big|\mathcal
  F_t\right]  = F(t,\xi_t(u)_{T\leq u \leq T+\Theta}),
$$
where $F$ is a deterministic mapping from $[0,T]\times \Hh$,
with $\Hh := L^2([T,T+\Theta])$ to $\RR$, defined by 
\begin{equation}\label{vixasian}
F(t,x) := \EE\left[f\left(\frac{1}{\Theta}\int_{T}^{T+\Theta}x(u) \Ee_{t,{T}}(u) du\right) \right],
\end{equation}
where
\begin{equation}\label{eq:DDExp}
\Ee_{t,T}(u):= \Ee\left(2\int_t^\cdot g(u,s)^\top
  dW_s\right)_{T}   = \exp\left(2\int_t^{T} g(u,s)^\top
  dW_s - 2\int_t^{T} \|g(u,s)\|^2 ds\right)
\end{equation}
is the Dol\'eans-Dade exponential. This representation allows to
easily derive the hedging strategy for such a product, as discussed in
the following section.

\subsection{Martingale representation and hedging of VIX options}
The following theorem provides a martingale representation for the VIX
options, which serves as a basis for the hedging strategy. 

\begin{proposition}
Let the function $f$ be {piecewise} differentiable with $f'$ piecewise continuous
and bounded. Then the option price $P_t$ admits the martingale representation
$$
P_T = P_t {-} 2\int_t^T \int_{T}^{T+\Theta} D_x F(s,\xi_s) (u) \xi_s(u)
g(u,s)^\top du\, dW_s,
$$
where the Fr\'echet derivative $D_xF$ is given by 
$$
D_x F(t,x)(v) = \EE\left[f'\left(\frac{1}{\Theta}\int_T^{T+\Theta}x(u) \Ee_{t,T}(u) du\right) \frac{\Ee_{t,T}(v)}{\Theta} \right] 
$$
\end{proposition}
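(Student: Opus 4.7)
The plan is to identify the integrand in the martingale representation of $(P_t)$ via the Clark--Ocone formula. The first step is to rewrite the payoff: solving the SDE~\eqref{eq:DynamicsXi} between $s$ and~$T$ gives $\xi_T(u)=\xi_s(u)\Ee_{s,T}(u)$, and combining the $\mathcal F_s$-measurability of $\xi_s$ with the independence of $\Ee_{s,T}(\cdot)$ from $\mathcal F_s$, the conditional form of Fubini immediately yields $P_s=F(s,\xi_s)$ with $F$ as in~\eqref{vixasian}. Consequently $(P_s)_{0\leq s\leq T}$ is a square-integrable $\mathbb F$-martingale, and the Brownian martingale representation theorem reduces the claim to identifying its predictable integrand.

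To do so I would apply Clark--Ocone, which expresses the integrand as $\EE[D_s Y\mid\mathcal F_s]$, where $Y$ is the terminal payoff and $D$ the Malliavin derivative. Starting from the explicit formula $\xi_T(u)=\xi_0(u)\Ee_{0,T}(u)$, differentiation under the stochastic integral yields $D_s\xi_T(u)=2g(u,s)\xi_T(u)\ind_{\{s\leq T\}}$ as an element of~$\RR^d$. The chain rule then gives
$$
D_s Y = \frac{2}{\Theta}\,f'\!\left(\frac{1}{\Theta}\int_T^{T+\Theta}\xi_T(u)\,du\right)\int_T^{T+\Theta} g(u,s)\,\xi_T(u)\,du.
$$

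It remains to take the conditional expectation and recognise the Fr\'echet derivative. Writing $\xi_T(u)=\xi_s(u)\Ee_{s,T}(u)$ once more, pulling the $\mathcal F_s$-measurable factor $\xi_s(u)$ outside the conditional expectation, and invoking the independence of $\Ee_{s,T}(\cdot)$ from $\mathcal F_s$ turns that conditional expectation into an unconditional one in which $\xi_s$ plays the role of a deterministic parameter. By the definition stated in the proposition, this unconditional expectation equals $\Theta\, D_xF(s,\xi_s)(u)$, and substituting back into the Clark--Ocone formula produces the announced representation for $P_T-P_t$.

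The main obstacle is that the stated regularity of~$f$ (piecewise differentiable with bounded piecewise continuous $f'$, as for the VIX call $f(x)=(\sqrt{x}-K)^+$) is just below the threshold under which $Y$ automatically lies in the Malliavin Sobolev space $\mathbb D^{1,2}$, and the chain rule cannot be applied naively at the kinks of~$f$. The natural remedy is mollification: approximate $f$ by $C^1$ functions $f_n$ with $\|f_n'\|_\infty$ uniformly bounded and $f_n'\to f'$ almost everywhere, apply Clark--Ocone to the smoothed payoffs $Y_n$, and pass to the limit via dominated convergence, using the $L^p$-integrability of the Dol\'eans--Dade exponentials and the fact that the integrated forward variance $\tfrac{1}{\Theta}\int_T^{T+\Theta}\xi_T(u)\,du$ admits a density, so that the exceptional set of kinks of~$f$ is reached with probability zero.
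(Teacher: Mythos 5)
Your proposal is correct, but it follows a genuinely different route from the paper, so it is worth contrasting the two. The paper's proof regularises $f$ into $f_\varepsilon$, computes the first and second Fr\'echet derivatives $D_xF_\varepsilon$ and $D^2_{xx}F_\varepsilon$ explicitly, computes the time derivative $D_tF_\varepsilon$ (exploiting the independence of $\Ee_{t+\delta,T}$ and $\Ee_{t,t+\delta}$), verifies that the drift cancels as it must, and then applies the \emph{infinite-dimensional It\^o formula} (Da~Prato--Zabczyk) to $F_\varepsilon(t,\xi_t)$ to read off the martingale representation for the smoothed payoff; the limit $\varepsilon\downarrow 0$ is carried out via dominated convergence for stochastic integrals, using Lemma~\ref{density.lm} to guarantee that $\frac{1}{\Theta}\int_T^{T+\Theta}x(u)\Ee_{t,T}(u)du$ has no atoms, so that $f'_\varepsilon\to f'$ pointwise a.s.~at the points where it matters. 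Your approach replaces the infinite-dimensional It\^o step with the Clark--Ocone formula: you compute $D_sY$ by the Malliavin chain rule from $D_s\xi_T(u)=2g(u,s)\xi_T(u)$, then use the factorisation $\xi_T(u)=\xi_s(u)\Ee_{s,T}(u)$ and the independence of $\Ee_{s,T}(\cdot)$ from $\mathcal F_s$ to turn $\EE[D_sY\mid\mathcal F_s]$ into exactly $2\int_T^{T+\Theta}D_xF(s,\xi_s)(u)\xi_s(u)g(u,s)^\top du$. Both proofs must mollify $f$ for the same reason -- the VIX call payoff is not $C^1$ at the kink -- and both rely on the no-atom property of the integrated lognormal mixture to pass to the limit. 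What the Clark--Ocone route buys is conceptual economy: you never need $D^2_{xx}F_\varepsilon$ or $D_tF_\varepsilon$, nor the uniform-continuity verifications required to invoke the Da~Prato--Zabczyk theorem. What it costs is that you have to check $Y_\varepsilon\in\mathbb D^{1,2}$ (straightforward here, from boundedness of $f'_\varepsilon$ and the exponential integrability of the Dol\'eans--Dade factors) and you do not obtain as a by-product the infinite-dimensional backward PDE $D_tF+\frac12\langle d\xi,D^2_{xx}F\,d\xi\rangle=0$ that the paper's proof produces along the way. One cosmetic remark: Clark--Ocone gives the integrand with a $+$ sign; the $-$ in the displayed statement appears to be a typographical artefact, since the paper's own It\^o computation also yields $\langle D_xF_\varepsilon,d\xi_t\rangle=+2\int_T^{T+\Theta}D_xF_\varepsilon(t,\xi_t)(u)\xi_t(u)g(u,t)^\top du\,dW_t$.
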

\begin{proof}
\textit{Step 1.}
Let $f_\varepsilon\in C^2(\RR)$ be a mollified version of the
function~$f$ such that~$f'_\varepsilon$ and~$f^{\prime\prime}_\varepsilon$ are bounded and
continuous, $|f_\varepsilon(x)| \leq |f(x)| + C$ for all~$x$,
$f'_\varepsilon$ { is bounded uniformly on
  $\varepsilon$ and} converges to~$f'$ at the points of continuity of $f'$ and
$f_\varepsilon(x)$ converges to~$f(x)$ for all~$x$ as $\varepsilon$ tends to zero. 
One can for example take 
$$
f_\varepsilon(x) = \int_{\RR} f(x+z) p_\varepsilon (z) dz, 
$$
where $p_\varepsilon$ is a family of smooth compactly supported
densities converging to the delta function as $\varepsilon$ approaches zero. 
The first step is to prove the martingale representation for the price of the option
with pay-off~$f_\varepsilon$ by applying the infinite-dimensional It\^o formula.
Let $(v,w) \in [T,T+\Theta]^2$, and define
$$
F_\varepsilon(t,x) := \EE\left[f_\varepsilon\left(\frac{1}{\Theta}\int_T^{T+\Theta}x(u) \Ee_{t,T}(u)du\right) \right],
$$ 
with $\Ee_{t,T}(\cdot)$ defined in~\eqref{eq:DDExp}.
For $h\in \Hh$, {the mean value theorem implies the existence of $\theta_{\delta} \in [0,1]$ such that}
\begin{align*}
\lim_{\delta\downarrow 0} \frac{F_\varepsilon(t,x+\delta h) -
  F_\varepsilon(t,x)}{\delta}
   &= \lim_{\delta \downarrow 0} \EE\left[f'_\varepsilon\left(\frac{1}{\Theta}\int_T^{T+\Theta}(x(u) + \theta_{\delta} \delta h(u))\Ee_{t,T}(u) du\right) \frac{1}{\Theta}\int_T^{T+\Theta}\Ee_{t,T}(v) h(v) dv \right] \\
& =  \frac{1}{\Theta}\int_T^{T+\Theta} h(v) dv\EE\left[f'_\varepsilon\left(\frac{1}{\Theta}\int_T^{T+\Theta}x(u)\Ee_{t,T}(u) du\right)\Ee_{t,T}(v) \right],
\end{align*}
where we have used the dominated convergence theorem and Fubini's Theorem. 
Since the expectation under the integral is bounded, 
the Fr\'echet derivative of $F_\varepsilon$ is then given by 
\begin{align*}
D_x F_\varepsilon(t,x)(v)
 = \EE\left[f'_\varepsilon\left(\frac{1}{\Theta}\int_T^{T+\Theta}x(u)\Ee_{t,T}(u) du\right)
 \frac{\Ee_{t,T}(v)}{\Theta}  \right]
  \in \Hh.
\end{align*}
Moreover, a similar argument shows that it is uniformly continuous in $x$ on $[0,T]\times \Hh$. 
Iterating the procedure, we find the second Fr\'echet derivative
\begin{align*}
D^2_{xx} F_\varepsilon(t,x)(v,w)
 = \EE\left[f_\varepsilon^{\prime\prime}\left(\frac{1}{\Theta}\int_T^{T+\Theta}x(u)\Ee_{t,T}(u)du\right)
 \frac{\Ee_{t,T}(v) \Ee_{t,T}(w)}{\Theta^2} \right] \in \mathcal{L}(\Hh,\Hh),
\end{align*}
which is also uniformly continuous, where $\mathcal{L}(\Hh, \Hh)$ denotes the class of linear operators from~$\Hh$ to~$\Hh$.
Finally, for the derivative of~$F_\varepsilon$ with respect to~$t$, we can write
$$
\frac{F_\varepsilon(t+\delta,x) - F_{\varepsilon}(t,x)}{\delta} =
{-} \frac{1}{\delta}\EE\left[f_\varepsilon\left(\frac{1}{\Theta}\int_T^{T+\Theta} x(u)X^u Y^u_{t+\delta} du\right)
 - f_\varepsilon\left(\frac{1}{\Theta}\int_T^{T+\Theta} x(u)X^u Y^u_{t} du\right)\right],
$$
with 
$X^u := \Ee_{t+\delta,T}(u)$
and
$Y^u_r := \Ee_{t,r}(u)$.
For $r\leq t+\delta$, $X^u$ and $Y^u_r$ depend on the increments of~$W$ 
over disjoint intervals, and so are independent;
hence the infinite-dimensional It\^o formula~\cite[Part~I, Theorem~4.32]{daprato2014stochastic} with respect to~$Y$, keeping~$X$ constant, yields
\begin{align*}
&f_\varepsilon\left(\frac{1}{\Theta}\int_T^{T+\Theta} x(u)X^u Y^u_{t+\delta} du\right)
 - f_\varepsilon\left(\frac{1}{\Theta}\int_T^{T+\Theta} x(u)X^u Y^u_{t} du\right)\\
& = 2\int_t^{t+\delta} f'_\varepsilon\left(\frac{1}{\Theta}\int_T^{T+\Theta} x(u)X^u Y^u_{r} du\right) 
\frac{1}{\Theta}\int_T^{T+\Theta} x(v) X^v Y^v_r g(v,r)^\top  dv dW_r\\
& + \frac{2}{\Theta^2} \int_t^{t+\delta} f^{\prime\prime}_\varepsilon \left(\frac{1}{\Theta}\int_T^{T+\Theta} x(u)
    X^u Y^u_{r} du\right) \int_T^{T+\Theta} \int_T^{T+\Theta} x(v) x(w)
  X^v Y^v_r X^w Y^w_r g(v,r)^\top g(w,r)  dv  dw dr.
\end{align*}
After taking the expectation, the first term disappears due to the independence of~$X$ and~$Y$. 
Dividing by~$\delta$ and taking the limit as~$\delta$ tends to zero, we then obtain, by dominated convergence,
$$
D_t F_\varepsilon (t,x)
 = {-} 2\int_T^{T+\Theta}\int_T^{T+\Theta} 
 x(v) x(w) g(v,t)^\top g(w,t) D^2_{xx} F_\varepsilon(t,x) (v,w) dv dw. 
$$
{It follows that 
$$
D_t F_\varepsilon (t,\xi_t) dt =
-\frac{1}{2}\langle d\xi_t, D^2_{xx} F_\varepsilon(t,x) d\xi_t\rangle,
$$
as expected from the local martingale property of $F_\varepsilon(t,\xi_t)$. }
Now, applying It\^o's formula, we obtain a martingale
representation for the regularised option price
$$ 
F_\varepsilon(T,\xi_T) = F_\varepsilon(t,\xi_t)
 {-} 2\int_t^T\int_{T}^{T+\Theta} D_x F_\varepsilon(s,\xi_s) (u) \xi_s(u)g(u,s)^\top du\, dW_s.  
$$  
\noindent \textit{Step 2. } 
It remains to pass to the limit as $\varepsilon$ tends to zero. 
By dominated convergence, $F_\varepsilon(T,\xi_T)$ converges to $F(T,\xi_T) $ 
and $F_\varepsilon(t,\xi_t)$ to $F(t,\xi_t) $. 
For the convergence of the
stochastic integral, we apply also dominated convergence. 
First, observe that for $x\neq0$, the random variable
$\int_T^{T+\Theta}x(u)\Ee_{t,T}(u)$
has no atom (Lemma~\ref{density.lm} in the appendix).
Then, dominated convergence allows us to conclude that for every $t,x$ and $v$, 
$\lim_{\varepsilon\downarrow 0}D_x F_{\varepsilon}(t,x)(v) = D_xF(t,x)(v)$. 
In addition, this derivative is bounded {by $\frac{\sup_{\varepsilon,x}
|f'_\varepsilon(x)|}{\Theta}$,} which means that the inner integral 
$$
\int_{T}^{T+\Theta} D_x F_\varepsilon(s,\xi_s) (u) \xi_s(u) 
g(u,s)^\top du
$$
converges {and admits the following bound: 
$$
\left|\int_{T}^{T+\Theta} D_x F_\varepsilon(s,\xi_s) (u) \xi_s(u) 
g(u,s)^\top du\right| \leq \int_T^{T+\Theta} \frac{\sup_{\varepsilon,x}
|f'_\varepsilon(x)|}{\Theta} \xi_s(u) \|g(u,s)\|ds.
$$
This in turn implies that we can use 
dominated convergence for stochastic integrals (Theorem IV.32 in \cite{protter}) to prove the convergence of the outer integral. }
\end{proof}

\subsection{Pricing VIX options by Monte Carlo}\label{montecarlo.sec}
\subsubsection{Discretisation schemes}
From~\eqref{vixasian}, the computation of a forward variance option price in a
lognormal stochastic volatility model reduces to the computation of
the expected functional of an integral over a family of correlated
lognormal random variables. 
To compute this option price by Monte Carlo, we approximate~\eqref{vixasian} 
by discretising the integral. 
We shall consider the discretisation grid
\begin{equation}\label{eq:Grids}
\Tt_{\kappa}:= \left\{t_i^n = T + \Theta \left(\frac{i}{n}\right)^\kappa\right\}_{i=0,\ldots,n}
\qquad\text{for }\kappa>0,
\end{equation}
and the following two discretisation schemes: 
\begin{itemize}
\item the rectangle scheme on~$\Tt_{1}$
(with $\zeta^n_i := \int_{t^n_i}^{t^n_{i+1}} x(u) du$ and
$\eta^{n}(u) := \max\{t^n_i: t^n_i\leq u\}$):
$$
F_n(t,x) :=\EE\left[f\left(\frac{1}{\Theta}\sum_{i=0}^{n-1}\zeta^n_i  \Ee_{t,T}(t^n_i)\right) \right] = \EE\left[f\left(\frac{1}{\Theta}\int_{T}^{T+\Theta}x(u) \Ee_{t,T}(\eta^n(u)) du\right) \right];
$$
\item the trapezoidal scheme on~$\Tt_{\kappa}$
(with $\theta^n(u) := \frac{t^n_{i+1}-u}{t^{n}_{i+1} - t^n_i}$ and
$\overline{\eta}^n(u) := \min\{t^n_i: t^n_i> u\}$):
\begin{align*}
\widehat{F}_n(t,x)
 & := \EE\left[f\left(\frac{1}{\Theta}\sum_{i=0}^{n-1}\int_{t^n_i}^{t^n_{i+1}} x(u) (
    \theta^n (u)\Ee_{t,T}(t^n_i) + (1-\theta^n(u)) \mathcal
  E_{t,T}(t^n_{i+1}) )du\right) \right]\\
& = \EE\left[f\left(\frac{1}{\Theta}\int_{T}^{T+\Theta} x(u) (
    \theta^n (u)\Ee_{t,T}(\eta^n(u)) + (1-\theta^n(u)) \mathcal
  E_{t,T}(\overline{\eta}^n(u)) )du\right) \right].
\end{align*}
\end{itemize}
Note that the sequence of random variables $(Z_i)_{i=0}^{n-1}$ defined by
\begin{equation}\label{eq:ZProcess}
Z_{i}:= \log \Ee_{t,T}(t^n_i) = 2\int_t^T g(t^n_i,s)^\top dW_s - 2\int_t^T \|g(t^n_i,s)\|^2 ds
\end{equation}
forms a Gaussian random vector with mean
$m_i = - 2\int_t^T \|g(t^n_i,s)\|^2 ds$
and covariance matrix 
$$
C_{ij} := \mathrm{Cov}[Z_{i},Z_{j}] = 4 \int_t^T  g(t^n_i,s)^\top g(t^n_j,s) ds.
$$
The following proposition, proved in Appendix~\ref{app:Proofmc.prop}, characterises the convergence rate of these~schemes. 
\begin{proposition}\label{mc.prop}
Let $f$ be Lipschitz, $x(\cdot)$ bounded, and assume there exist
$\beta,c>0$ such that
$$
\left(\int_t^T \left\|g(t_2,s)-g(t_1,s)\right\|^2ds\right)^{1/2}\leq c(t_2-t_1) (t_2-T)^{\beta-1},
\quad\text{for all }T\leq t_1 <t_2.
$$
\begin{itemize}
\item For the rectangle scheme on~$\Tt_{1}$, 
$|F(t,x)-F_n(t,x)|= \mathcal{O}\left(\frac{1}{n}\right)$;
\item if in addition, for all $T\leq t_1 \leq t_2<t_3$,
$$
\left(\int_t^T \left\|g(t_2,s)-\frac{t_3-t_2}{t_3-t_1}g(t_1,s)- \frac{t_2-t_1}{t_3-t_1}g(t_3,s)\right\|^2ds\right)^{1/2}\leq c(t_3-t_1)^2 (t_3-T)^{\beta-2},
$$ 
then for the trapezoidal scheme on~$\Tt_{\kappa}$ with $\kappa(\beta+1)>2$, 
$|F(t,x)-\widehat F_n(t,x)|= \mathcal{O}\left(\frac{1}{n^2}\right)$. 
\end{itemize}
\end{proposition}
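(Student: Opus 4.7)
My plan is to reduce both convergence statements, via the Lipschitz continuity of $f$ and the boundedness of $x(\cdot)$, to pointwise estimates on the $L^1$-error between $\Ee_{t,T}(u)$ and its piecewise-constant (rectangle) or piecewise-linear (trapezoidal) approximation. Writing $L$ for the Lipschitz constant of $f$, $M := \|x\|_\infty$, and applying Fubini inside the expectation,
\begin{equation*}
|F(t,x)-F_n(t,x)| \leq \frac{LM}{\Theta}\int_T^{T+\Theta}\EE\bigl|\Ee_{t,T}(u)-\Ee_{t,T}(\eta^n(u))\bigr|\,du,
\end{equation*}
and analogously for $\widehat F_n$ with the linear-interpolation error inside the absolute value. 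Throughout I write $\Ee_{t,T}(u) = e^{Y_u}$ with $Y_u$ Gaussian (as in \eqref{eq:ZProcess}); the basic pointwise tool is the elementary inequality $|e^a-e^b|\leq (e^a + e^b)|a-b|$, which, combined with Cauchy--Schwarz and the uniform boundedness of $\EE[e^{pY_u}]$ in $u\in[T,T+\Theta]$ for each fixed $p$ (by Gaussianity and local square-integrability of $g$), yields $\EE|e^{Y_u}-e^{Y_v}|\leq C\|Y_u-Y_v\|_{L^2}$ for all such $u,v$.

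For the rectangle scheme, It\^o's isometry on the stochastic-integral part of $Y_u - Y_{\eta^n(u)}$ and a Cauchy--Schwarz bound on the drift give $\|Y_u-Y_{\eta^n(u)}\|_{L^2} = \mathcal{O}\bigl((u-\eta^n(u))(u-T)^{\beta-1}\bigr)$ via the first regularity hypothesis. On $\Tt_1$ one has $u-\eta^n(u)\leq \Theta/n$ uniformly, and since $\int_T^{T+\Theta}(u-T)^{\beta-1}du<\infty$ for $\beta>0$, the outer integral yields $\mathcal{O}(1/n)$; the first cell $[T,T+\Theta/n]$ requires a separate direct estimate contributing only $\mathcal{O}(n^{-(\beta+1)}) = o(1/n)$.

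The trapezoidal case is the main obstacle. For $u\in[t_i^n,t_{i+1}^n]$ I would split the integrand error as
\begin{equation*}
e^{Y_u} - \theta^n(u)\,e^{Y_{t_i^n}} - (1-\theta^n(u))\,e^{Y_{t_{i+1}^n}} = \bigl(e^{Y_u}-e^{\overline Y}\bigr) + \bigl(e^{\overline Y} - \theta^n(u)\,e^{Y_{t_i^n}} - (1-\theta^n(u))\,e^{Y_{t_{i+1}^n}}\bigr),
\end{equation*}
with $\overline Y := \theta^n(u)\,Y_{t_i^n}+(1-\theta^n(u))\,Y_{t_{i+1}^n}$. The first (exponent-replacement) term is handled by the general lognormal estimate above, but now $\|Y_u-\overline Y\|_{L^2}$ is controlled by the \emph{second} regularity hypothesis, yielding an $L^1$ bound of order $(t_{i+1}^n-t_i^n)^2(t_{i+1}^n-T)^{\beta-2}$. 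The second (convexity-gap) term vanishes to first order in $Z := Y_{t_i^n}-Y_{t_{i+1}^n}$, so a second-order Taylor expansion of $x\mapsto e^x$ combined with Cauchy--Schwarz and uniform $L^p$-bounds on $e^{Y_{t_i^n}}, e^{Y_{t_{i+1}^n}}$ gives an $L^1$ bound of order $\EE[Z^2]$; the first hypothesis then yields $\EE[Z^2] = \mathcal{O}\bigl((t_{i+1}^n-t_i^n)^2(t_{i+1}^n-T)^{2\beta-2}\bigr)$, which is subdominant for $\beta<1$.

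Integrating the cellwise bounds over $u\in[t_i^n,t_{i+1}^n]$ and summing, one picks up one extra factor $(t_{i+1}^n-t_i^n)$. On the graded grid, $t_{i+1}^n-t_i^n\asymp \kappa\Theta\,i^{\kappa-1}n^{-\kappa}$ for $i\geq 1$ and $t_{i+1}^n-T\asymp \Theta(i/n)^{\kappa}$, so the dominant summand becomes $C(i+1)^{\kappa(\beta+1)-3}n^{-\kappa(\beta+1)}$. The condition $\kappa(\beta+1)>2$ is exactly what makes $\sum_{i\geq 1}(i+1)^{\kappa(\beta+1)-3}=\mathcal{O}\bigl(n^{\kappa(\beta+1)-2}\bigr)$, delivering $\mathcal{O}(1/n^2)$; the boundary cell $i=0$ contributes $\Theta^{\beta+1}n^{-\kappa(\beta+1)}=o(n^{-2})$ under the same hypothesis. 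The most delicate part is the interplay between the cellwise Taylor decomposition and the near-boundary singularity $(u-T)^{\beta-2}$, which is precisely what the graded grid $\Tt_\kappa$ is designed to resolve; once that is correctly packaged, the remainder is careful but routine bookkeeping.
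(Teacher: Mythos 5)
Your proposal follows essentially the same strategy as the paper's proof: reduce, via the Lipschitz bound on $f$ and boundedness of $x$, to an $L^1$ estimate on the pointwise discretisation error of $\Ee_{t,T}(\cdot)$; convert $L^1$ differences of exponentials of Gaussians into $L^2$ norms of the exponents; for the trapezoidal case, split the cellwise error into an exponent-replacement piece (governed by the second hypothesis on $g$) and a convexity-gap piece (second-order Taylor, giving an $L^2$-squared contribution); and close with the graded-grid summation $\sum_i (t^n_{i+1}-t^n_i)^3(t^n_{i+1}-T)^{\beta-2} = \mathcal{O}(n^{-2})$ under $\kappa(\beta+1)>2$. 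Two technical points you have glossed that the paper treats explicitly, and which deserve a line each in a full write-up: (a) the drift part of $Z_u$ is $-2\int_t^T\|g(u,s)\|^2\,ds$, and $\|g\|^2$ is not an affine functional of $g$, so neither regularity hypothesis directly controls the interpolation error of the drift---the paper uses a parallelogram-type identity to peel it into terms controllable by the two hypotheses; (b) after Cauchy--Schwarz the convexity-gap bound reads $\EE[|Z_{t^n_{i+1}}-Z_{t^n_i}|^4]^{1/2}$, and the reduction to $\EE[Z^2]$ needs the Gaussian fourth-moment identity (the paper says so explicitly). Neither is a gap in your outline, just bookkeeping you flagged as routine, and which is indeed routine once noted.
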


\subsubsection{Control variate}\label{sec:ControlVarLN}
In our model, the squared VIX index 
$\VIX^2_T := \frac{1}{\Theta} \int_{T}^{T+\Theta} \xi_T (u) du$
is an integral over a family of lognormal random variables. 
Mimicking Kemna and Vorst~\cite{Kemna}'s control variate trick 
(originally proposed in the context of Asian options in lognormal models), 
it is natural to approximate this integral by the exponential of an
integral of a corresponding family of Gaussian random variables:
\begin{equation}\label{eq:ApproxVIXSquared}
\overline{\VIX}^2_T := \exp\left(\frac{ 1}{\Theta}\int_{T}^{T+\Theta} \log  \xi_T (u) du \right).
\end{equation}
The corresponding approximation for the VIX option price,
which can be used as a control variate in the Monte Carlo scheme, is given by 
$$
{\overline{P}_t
 := \EE\left[f\left(\overline{\VIX}^2_T\right) \Big|\mathcal F_t\right]
 =: \overline{F}\left(t,\xi_t(u)_{T\leq u \leq T+\Theta}\right),}
$$
with  
$\overline{F}(t,x) := \EE\left[f\left(e^{Y}\right)\right]$,
and~$Y$ Gaussian with mean~$m_{Y}$ and variance~$\sigma^2_{Y}$
given by
\begin{align*}
m_{Y} &= \frac{1}{\Theta} \int_{T}^{T+\Theta} \left\{\log x(u)-2 \int_t^{T}
\|g(u,s)\|^2 ds\right\}du, 
\\
\sigma_{Y}^2 &= \frac{4}{\Theta^2}\int_{[T,{T+\Theta}]^2}du dv \int_t^{T} g(u,s)^\top g(v,s) ds. 
\end{align*}
For a Call option on VIX, $f(x) = (\sqrt{x}-K)^+$,  the approximate price is therefore given by
\begin{align*}
{\overline{P}_t} = \EE\left[\left(e^{\frac{1}{2}Y}-K\right)^+\right]
 = \overline{Y} \Nn\left(\frac{\log\frac{\overline{Y}}{K} + \frac{1}{8}\sigma_{Y}^2}{\sigma_{Y}/2}\right)
  - K \Nn\left(\frac{\log\frac{\overline{Y}}{K} - \frac{1}{8}\sigma_{Y}^2}{\sigma_{Y}/2}\right),
\end{align*}
with $\overline{Y} = \exp\left(\frac{1}{2}m_{Y} +  \frac{\sigma_{Y}^2}{8}\right)$.

\subsection{Numerical illustration}\label{bergomi.sec}
Consider the model introduced in~\eqref{volterra.eq}, together with the characterisation~\eqref{intg.eq},
essentially the rough Bergomi model from~\cite{bayer2016pricing}.
From~\eqref{eq:DynamicsXi}, the dynamics of the forward variance is
$$
\frac{d\xi_t(T)}{\xi_t(T)} = 2\alpha(T-t)^{H-\frac{1}{2}} dW_t,
  \quad \text{for all }0\leq t\leq T.
$$
Since all the forward variances are driven by the same Brownian
motion, it is enough to invest into a single variance swap to achieve perfect hedging. 
As an example, consider an option with pay-off
$$
f\left(\frac{1}{\Theta}\int_{T}^{{T+\Theta}} \xi_{T}(u) du\right),
$$
and denote by $F_t$ its price at time $t$, with $F_t = F(t,\xi_t)$,
with~$F$ as in~\eqref{vixasian}. 
Using the Monte Carlo method described in Section~\ref{montecarlo.sec}, the option price is approximated by 
$$
F_n(t,x) = \EE\left[f\left(\frac{1}{\Theta}\sum_{i=0}^{n-1} \zeta^n_i e^{Z_i}\right)\right]
$$
in the rectangle discretisation or by 
$$
\widehat F_n(t,x) = \EE\left[f\left(\frac{1}{\Theta}\sum_{i=0}^{n-1}
    \int_{t^n_i}^{t^n_{i+1}} x(u) (\theta^n(u) e^{Z_i} + (1-\theta^n(u))e^{Z_{i+1}})\right)\right]
$$
in the trapezoidal discretisation, where $(Z_i)_{i=0}^{n-1}$ is a Gaussian random vector with mean
$$
m_i = -2 \alpha^2\int_t^{T} |t_i-s|^{2H-1} ds = -\alpha^2\frac{|t_i-T|^{2H} - |t_i-t|^{2H}}{H}
$$
and covariance matrix $(C_{ij})_{i,j=0,\ldots,n-1}$, which can be computed as follows:
when $i<j$, 
\begin{align*}
C_{ij} &= 4\alpha^2 \int_t^{T} (t_i - s)^{H-\frac{1}{2}} (t_j -
s)^{H-\frac{1}{2}} ds \\ &= 4\alpha^2 \int_0^{t_i-t}(t_j-t_i +
s)^{H-\frac{1}{2}} s^{H-\frac{1}{2}} ds - 4\alpha^2 \int_0^{t_i-{T}}(t_j-t_i +
s)^{H-\frac{1}{2}} s^{H-\frac{1}{2}} ds\\
& = \frac{4\alpha^2}{H+\frac{1}{2}} (t_j-t_i)^{H-\frac{1}{2}}
(t_i-t)^{H+\frac{1}{2}}
{}_{2}F_1\left(\frac{1}{2}-H,\frac{1}{2}+H,\frac{3}{2}+H,-\frac{t_i-t}{t_j-t_i}\right)\\
&-\frac{4\alpha^2}{H+\frac{1}{2}} (t_j-t_i)^{H-\frac{1}{2}}
(t_i-{T})^{H+\frac{1}{2}} {}_{2}F_1\left(\frac{1}{2}-H,\frac{1}{2}+H,\frac{3}{2}+H,-\frac{t_i-{T}}{t_j-t_i}\right),
\end{align*}
where ${}_2F_1$ is the Hypergeometric function~\cite[integral 3.197.8]{grad}, and when $i=j$, 
$$
C_{ii} = 4\alpha^2 \int_t^{T} (t_i - s)^{2H-1} ds = \frac{2\alpha^2}{H}\left[(t_i-t)^{2H}-(t_i-{T})^{2H}\right].
$$
The following corollary of Proposition~\ref{mc.prop} provides the
convergence rates of the Monte Carlo scheme. 
To simplify notation, we assume that $t\geq 0$ for the rest of this section. 
\begin{corollary}\label{mcrb1.corr}
Let $f$ be Lipschitz and assume that $x(\cdot)$ is bounded.
In the rough Bergomi model,
\begin{itemize}
\item for the rectangle scheme on~$\Tt_{1}$, we have
$|F(t,x) - F_n(t,x)|= \mathcal{O}\left(\frac{1}{n}\right)$;
\item for the trapezoidal scheme on~$\Tt_{\kappa}$ (with
  $\kappa(H+1)>2$), we have
$\left|F(t,x) - \widehat F_n(t,x)\right|= \mathcal{O}\left(\frac{1}{n^2}\right)$.
\end{itemize}
\end{corollary}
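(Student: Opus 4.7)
The plan is to derive Corollary~\ref{mcrb1.corr} directly from Proposition~\ref{mc.prop} by checking that the rough Bergomi kernel $g(t,s)=\alpha(t-s)^{H-1/2}$ satisfies both regularity hypotheses of the Proposition with the specific choice $\beta=H$. Matching $\beta=H$ in the trapezoidal condition $\kappa(\beta+1)>2$ reproduces exactly the hypothesis $\kappa(H+1)>2$ written in the Corollary, so once both hypotheses are verified the two bounds on $|F-F_n|$ and $|F-\widehat F_n|$ are immediate consequences of the Proposition.

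For the first hypothesis I would work in the variable $u=t_2-s$, which turns $\int_t^T (g(t_2,s)-g(t_1,s))^2\,ds$ into $\alpha^2\int_a^{t_2-t}\bigl(u^{H-1/2}-(u-\delta)^{H-1/2}\bigr)^2du$ with $\delta=t_2-t_1$ and $a=t_2-T\geq\delta$. I would then distinguish two regimes. When $a\geq 2\delta$, the mean value theorem gives $|u^{H-1/2}-(u-\delta)^{H-1/2}|\leq|H-\tfrac12|\,\delta(u-\delta)^{H-3/2}\leq C\delta u^{H-3/2}$, and integrating $u^{2H-3}$ from $a$ to infinity yields $C\delta^2 a^{2H-2}$. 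When $\delta\leq a<2\delta$, I would split the integral at $2\delta$: on $[a,2\delta]$ the individual L$^2$-norms of $u^{H-1/2}$ and $(u-\delta)^{H-1/2}$ are each of order $\delta^{2H}$, and on $[2\delta,t_2-t]$ the previous mean value bound applies and contributes $O(\delta^{2H})$. Since $\delta\leq a\leq 2\delta$ implies $\delta^{2H}\leq 4\,\delta^2 a^{2H-2}$, both sub-cases give $\delta^2 a^{2H-2}$, and taking square roots produces the required bound $c(t_2-t_1)(t_2-T)^{H-1}$.

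For the second hypothesis I would view the quantity $g(t_2,s)-\tfrac{t_3-t_2}{t_3-t_1}g(t_1,s)-\tfrac{t_2-t_1}{t_3-t_1}g(t_3,s)$ as the error of linear interpolation at $t_2$ of the function $t\mapsto\alpha(t-s)^{H-1/2}$, whose second derivative is of order $(t-s)^{H-5/2}$. Again I would split the analysis according to whether $t_1-T$ is large or small compared to $t_3-t_1$. In the regime $t_1-T\geq t_3-t_1$, the standard interpolation error bound $\tfrac{(t_2-t_1)(t_3-t_2)}{2}\sup_{c\in(t_1,t_3)}|\partial_t^2 g(c,s)|$ combined with $\sup_c(c-s)^{H-5/2}\leq(t_1-s)^{H-5/2}$ and integration in $s$ gives an L$^2$-bound of order $(t_3-t_1)^2(t_1-T)^{H-2}$, which is comparable to $(t_3-t_1)^2(t_3-T)^{H-2}$ in this regime. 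In the opposite regime $t_1-T<t_3-t_1$, I would rewrite the expression as $[g(t_2,\cdot)-g(t_3,\cdot)]-\tfrac{t_3-t_2}{t_3-t_1}[g(t_1,\cdot)-g(t_3,\cdot)]$, apply the triangle inequality together with the first-order bound just established, and then check that $(t_3-t_2)(t_3-T)^{H-1}\leq C(t_3-t_1)^2(t_3-T)^{H-2}$ because $t_3-T$ is now bounded by a constant multiple of $t_3-t_1$.

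The main obstacle, and the reason for the case splitting in both parts, is the singularity of the kernel and of its derivatives at $s=t$: a naive Taylor-based bound produces factors $(t_1-T)^{H-1}$ or $(t_1-T)^{H-2}$ which blow up as $t_1\to T^+$, whereas the Proposition asks for a bound expressed through $(t_2-T)$ or $(t_3-T)$. The trick is that when a grid point sits close to $T$ the Taylor remainder is not the sharpest estimate; a simple triangle inequality using the already-proven first-order bound takes over, and the two regimes dovetail to give the uniform estimates needed to invoke Proposition~\ref{mc.prop}.
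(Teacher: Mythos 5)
Your proposal is correct and follows essentially the same route as the paper's appendix proof: reduce to checking the two kernel-regularity hypotheses of Proposition~\ref{mc.prop} with $\beta=H$, and in each case split according to whether the grid point nearest to $T$ is far from $T$ (where a mean-value / Taylor estimate works, with the crucial observation that $(t_1-T)^{H-1}$ and $(t_2-T)^{H-1}$, or $(t_1-T)^{H-2}$ and $(t_3-T)^{H-2}$, are comparable) or close to $T$ (where a direct estimate must be used, exploiting that $t_2-T$ is then comparable to $t_2-t_1$). The only genuine departure is in the near-$T$ regime for the second-order bound: you rewrite the interpolation error as $[g(t_2,\cdot)-g(t_3,\cdot)]-\tfrac{t_3-t_2}{t_3-t_1}[g(t_1,\cdot)-g(t_3,\cdot)]$ and reuse the already-established first-order estimate via the triangle inequality, whereas the paper bounds a midpoint expression directly by $C(t_3-T)^H$ (after an intermediate reduction from $t_2$ to the midpoint that, as stated, is not fully justified since the Peano kernels of linear interpolation at distinct nodes are not pointwise ordered). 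Your triangle-inequality route avoids this reduction and is therefore slightly cleaner; it also confirms, more transparently than the paper, that the proof does not depend on the sign of $H-\tfrac12$.
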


Figures~\ref{rbergomi.fig} and~\ref{rbergomi2.fig} illustrate the convergence of the Monte Carlo estimator for the VIX option price. 
The model parameters are $\alpha=0.2$, $H=0.1$, 
flat forward variance with $\sqrt{\xi_0}=20\%$, 
time to maturity $T = 1$ year, and $\Theta =0.1$. 
We used $50,000$ paths for all calculations.  {To decrease variance, we
use the discretised version of the control variate from
Section~\ref{sec:ControlVarLN}, meaning that the integral in
\eqref{eq:ApproxVIXSquared} is replaced with the sum obtained using
the corresponding discretisation scheme. This replacement is done both
in the Monte Carlo estimator and in the exact computation, so that the
control variate introduces no additional bias. To choose the
discretisation dates for the trapezoidal scheme, we took $\kappa=2$. }
Figure~\ref{rbergomi_smile.fig} plots the implied volatility smiles in
the rough Bergomi model for different parameter values 
(parameters not mentioned in the graphs are the same as above). 
The implied volatility of VIX options is defined assuming that the VIX future is lognormal and
using the model-generated VIX future as initial value. 
In this model where the volatility is lognormal, the VIX smile is almost completely flat.
This is of course due to the fact that the VIX itself is almost lognormal in the model, 
because the averaging interval~$\Theta$ (one month) 
over which the VIX is defined is rather short
(in fact, this feature can be used efficiently for approximation purposes, as in~\cite{jacquier2017vix}). 
The actual implied volatility smiles observed in the VIX option market (Figure~\ref{vix_smiles.fig}) 
are of course not flat and exhibit a pronounced positive skew. 
This inconsistency, already observed in~\cite{bayer2016pricing}, 
shows that while the rough Bergomi model fits accurately index option smiles, 
it is clearly not sufficient to calibrate the VIX smile. 
In the following section we therefore propose an extension of this model which makes such calibration possible. 
\begin{figure}
\centerline{\includegraphics[scale=0.35]{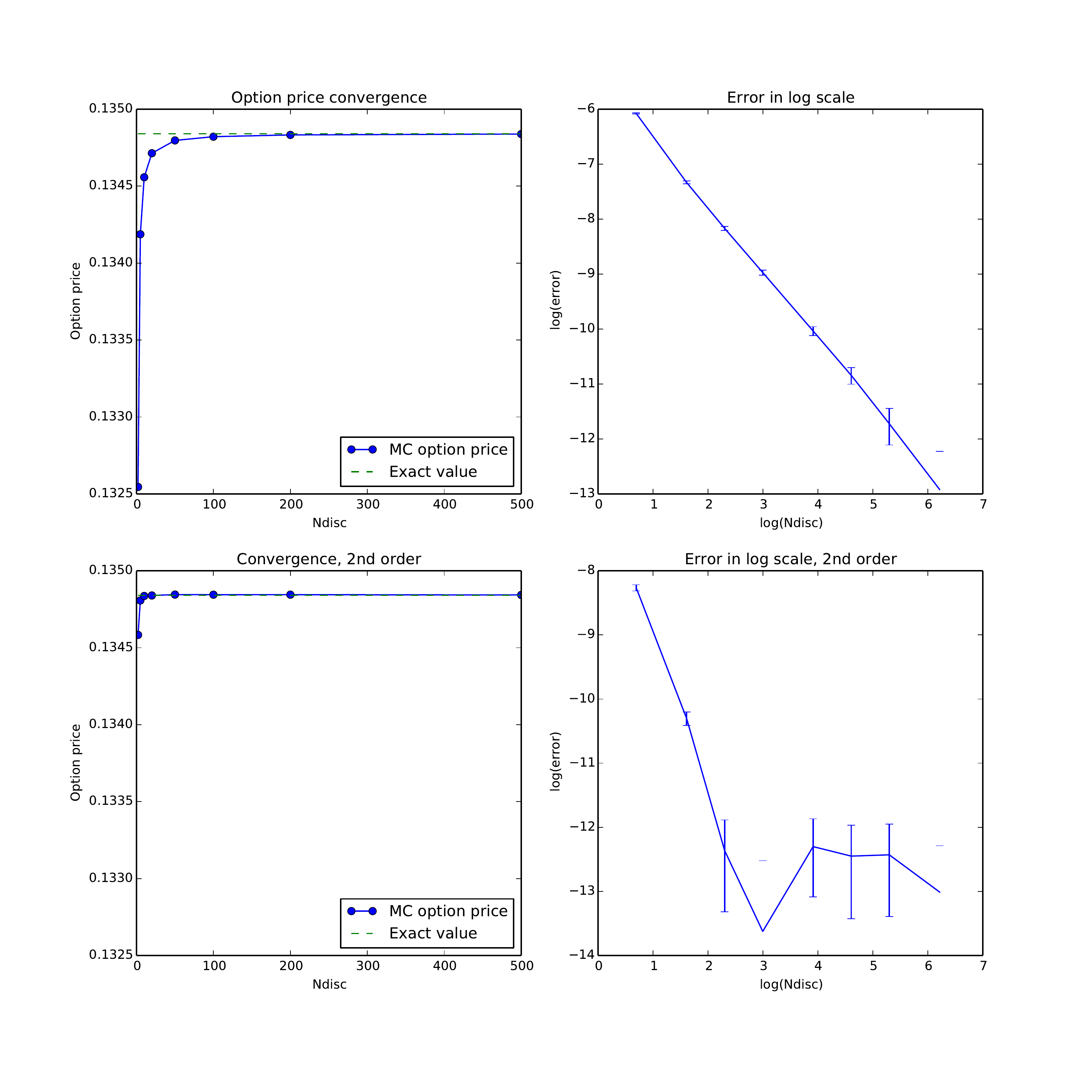}}
\caption{Convergence of the Monte Carlo estimator for the price
  of a Call option on the VIX with zero strike. 
  Top left: rectangle scheme. Bottom left: trapezoidal scheme. 
  The right graphs show the respective errors on a logarithmic
  scale. {When the confidence interval contains zero, only the upper
  bound is shown.}}
\label{rbergomi.fig}
\end{figure}

\begin{figure}
\centerline{\includegraphics[scale=0.35]{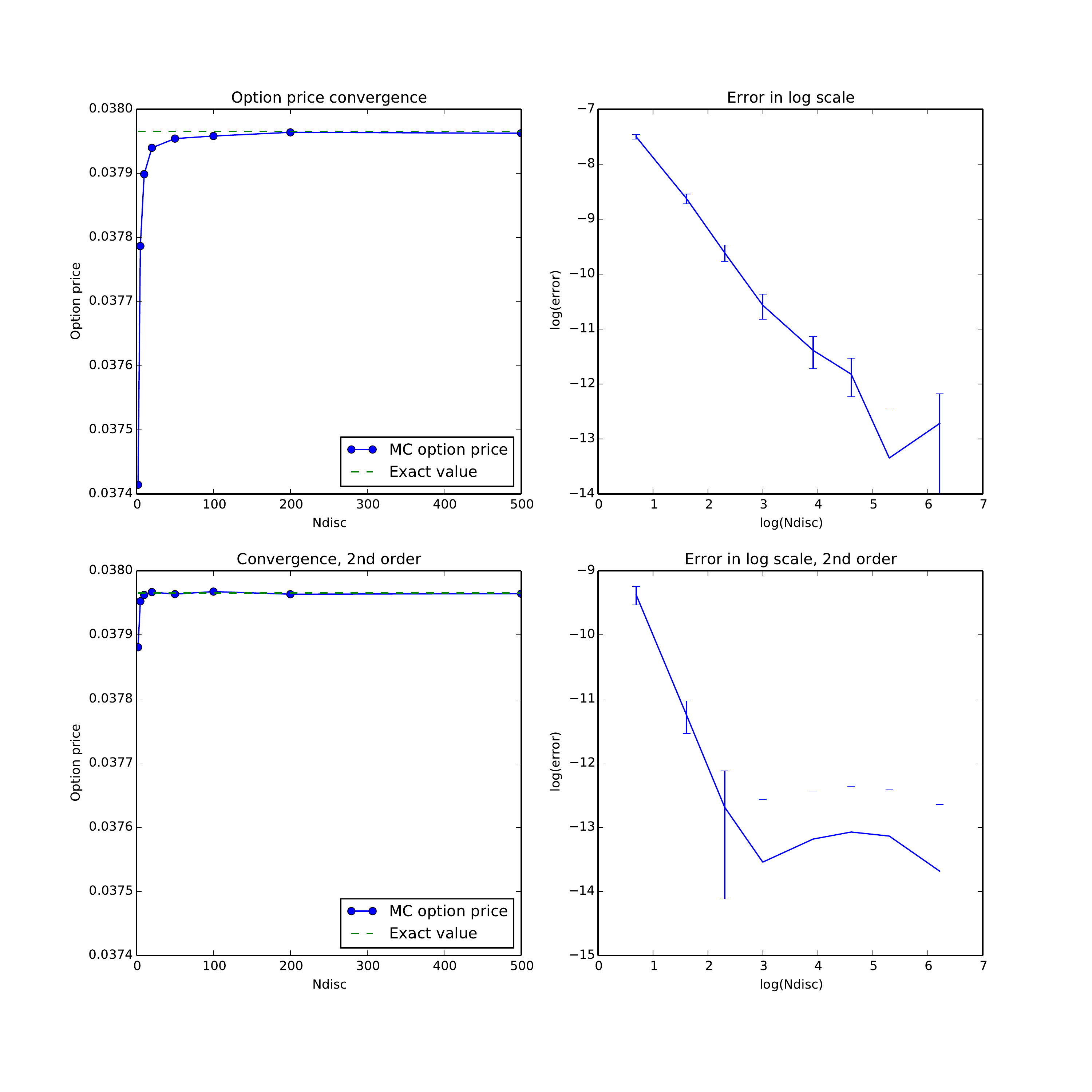}}
\caption{Convergence of the Monte Carlo estimator for the price
  of a Call option on the VIX with strike $K=0.1$. 
  Top left: rectangle   scheme. Bottom left: trapezoidal scheme. 
  The right graphs show the respective errors in the logarithmic
  scale.  {When the confidence interval contains zero, only the upper
  bound is shown.}}
\label{rbergomi2.fig}
\end{figure}

\begin{figure}
\centerline{\includegraphics[width=0.5\textwidth]{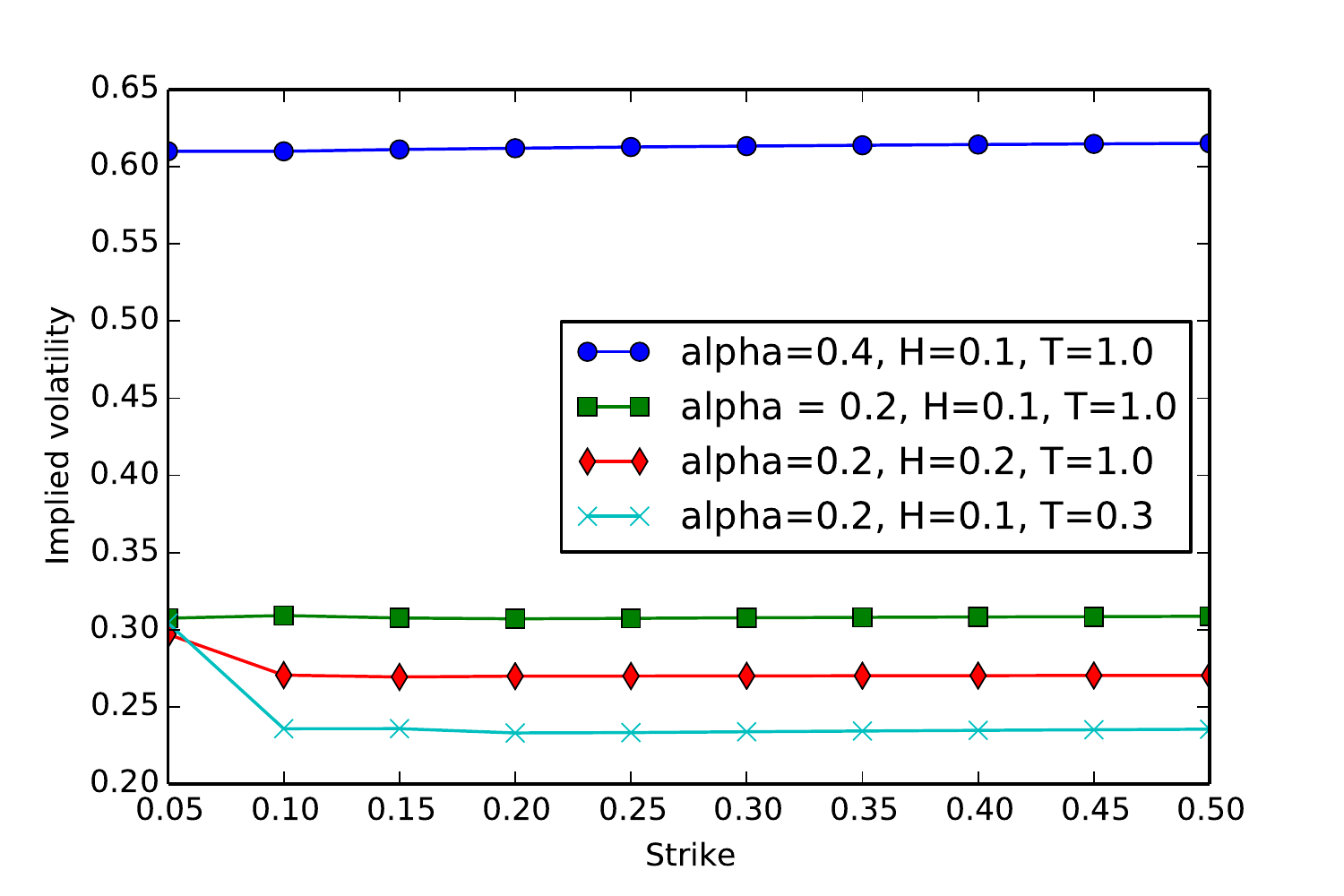}}
\caption{VIX implied smiles in rough Bergomi under different parameter combinations.}
\label{rbergomi_smile.fig}
\end{figure}

\begin{figure}
\centerline{\includegraphics[width=0.33\textwidth]{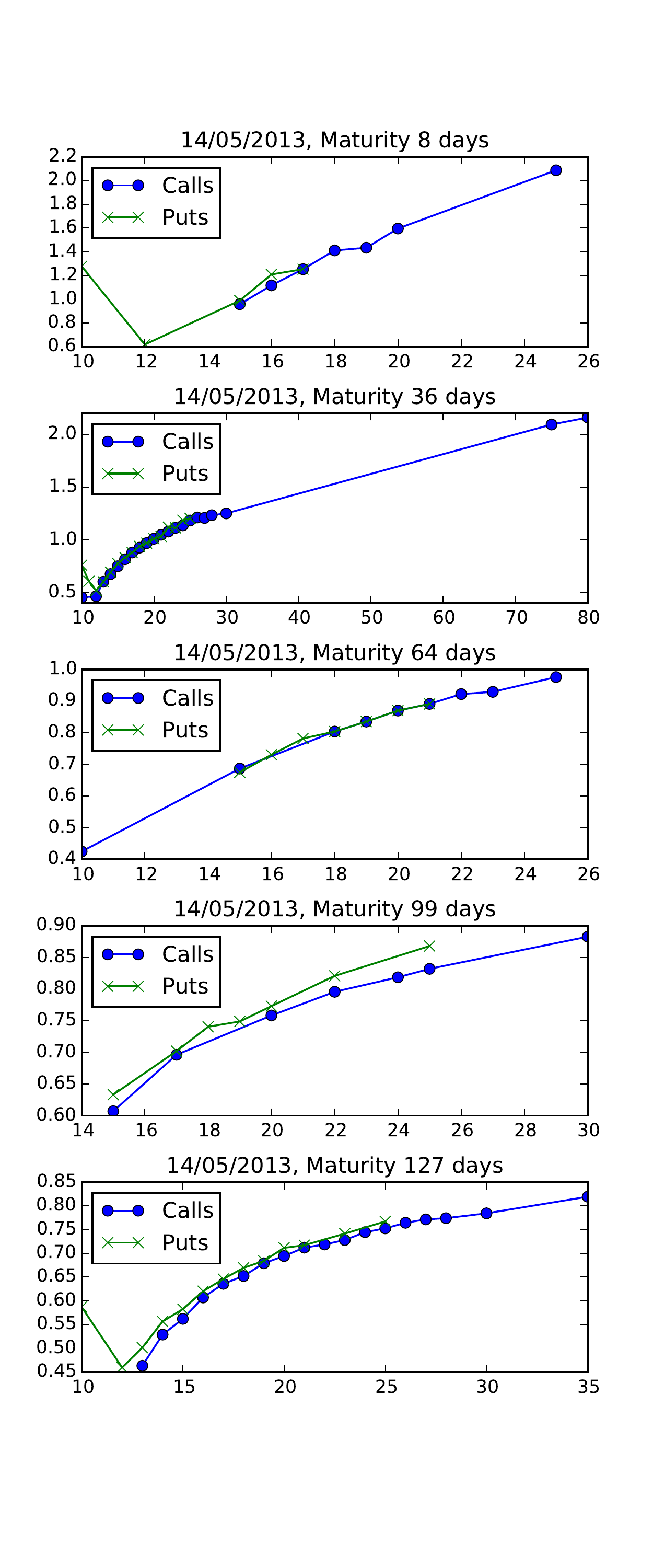}\includegraphics[width=0.33\textwidth]{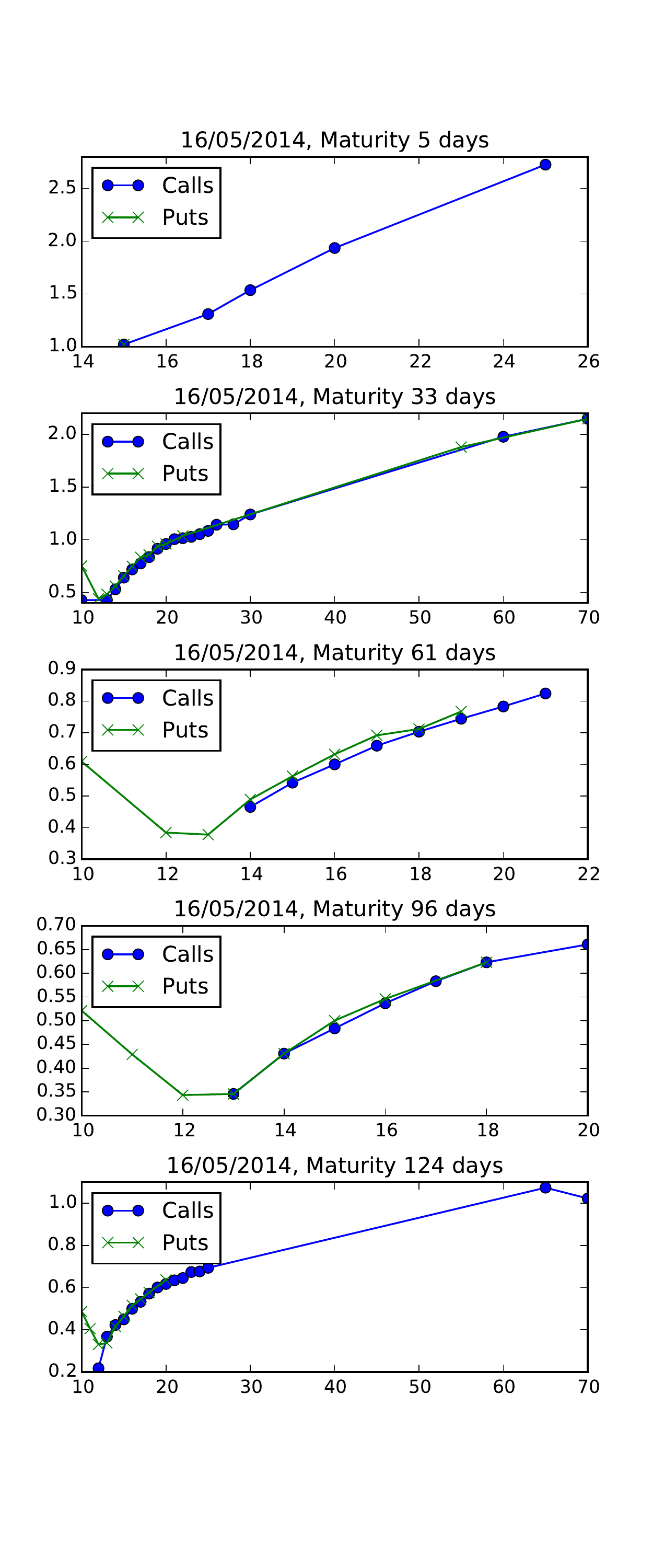}\includegraphics[width=0.33\textwidth]{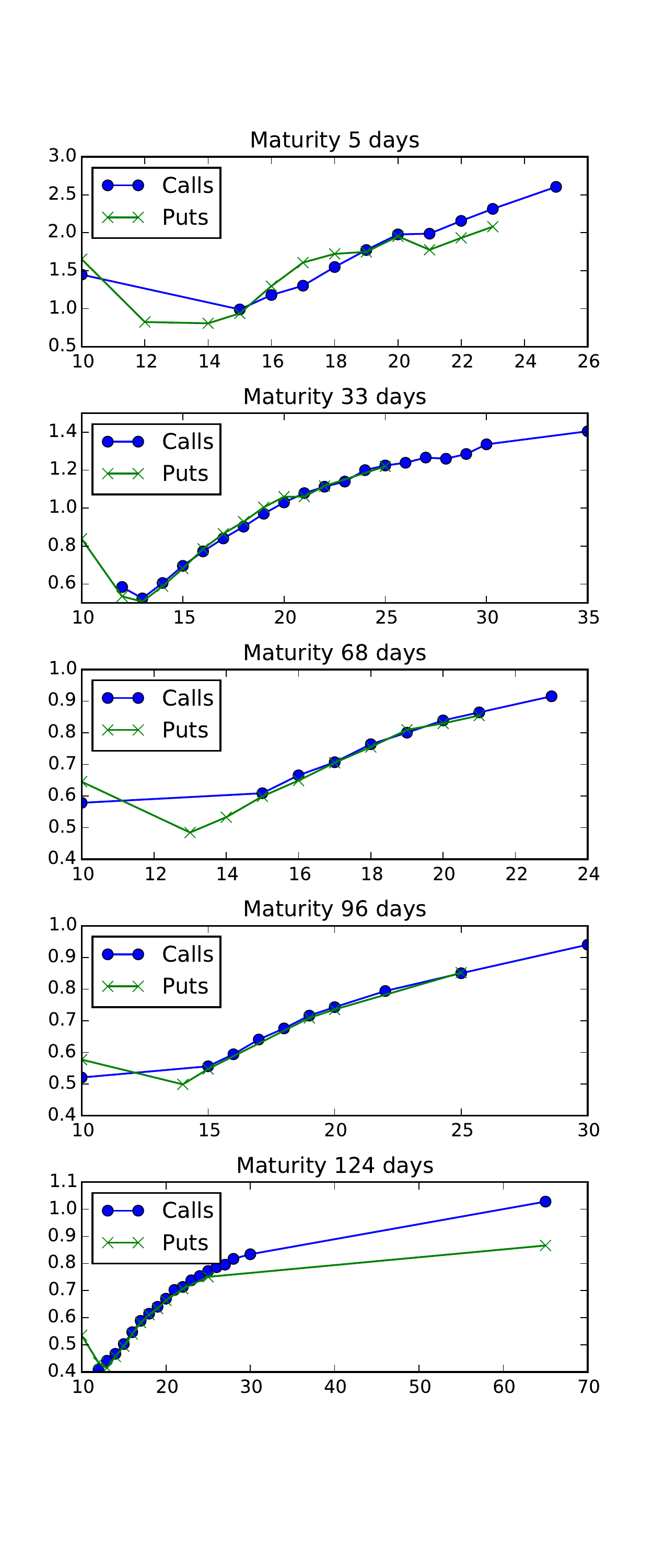}}
\caption{Actual VIX implied volatility smiles at different dates and for different maturities.}
\label{vix_smiles.fig}
\end{figure}

\newpage
\section{Modulated Volterra stochastic volatility processes}\label{modulated.sec}
We now propose a new class of rough volatility models, 
able to capture the specificities of the VIX implied volatility smile.
Specifically, we assume that the instantaneous volatility process satisfies
\begin{equation}\label{modulated.eq}
\sigma_t =   \Xi(t) e^{X_t}, 
\qquad \text{with}\qquad 
X_t = \int_{0}^t\sqrt{\Gamma_s} g(t,s)^\top dW_s,
\end{equation}
where again $\Xi(\cdot)$ is a deterministic function used for the calibration of the initial forward variance curve, 
$W$ a $d$-dimensional standard Brownian motion with respect to the filtration $\mathbb F\equiv (\mathcal F_t)_{t\in \RR_+}$, 
$g$ a kernel such that 
$\int_{0}^t \|g(t,s)\|^2 ds$ is finite for all $t\geq 0$, 
and $\Gamma$ a time-homogeneous positive conservative affine process independent of~$W$. 
This model is reminiscent of Brownian semi-stationary processes, 
introduced by Barndorff-Nielsen and Schmiegel~\cite{BSS}.
However, the stochastic integral starts here at time zero, instead of~$-\infty$,
in order to avoid working with affine processes on the whole real line.
Following~\cite[Theorem 2.7 and Proposition 9.1]{duffie2003affine}, the infinitesimal generator of~$\Gamma$ takes the form
$$
\mathcal{L}f(x) = k(\theta-x)\frac{\partial f}{\partial x}(x)
 + \frac{\delta^2 x}{2} \frac{\partial^2 f}{\partial x^2}(x)
 + \int_0^\infty\{f(x+z)-f(x)\}\{\m(dz) + x\mu(dz)\},
$$
where $k, \theta, \delta\geq 0$ and $\m, \mu$ are positive measures on $(0,\infty)$ 
such that $\int_0^\infty(z\wedge 1) \{\mu(dz) + \m(dz)\}$ is finite. 
For later use, we define the functions
$$
\R(u) := -ku + \frac{\delta^2}{2} u^2 + \int_0^\infty (e^{zu}-1)\mu(dz)
\qquad\text{and}\qquad
\F(u) := k\theta u + \int_0^\infty (e^{zu}-1)\m(dz). 
$$
We further assume that the kernel~$g$ satisfies $g(t,s) = g(t-s)$ for $0\leq s\leq t$, 
and we let $G(t) := \int_0^t \|g(s)\|^2 ds$.
We next introduce the following technical assumption:
\begin{assumption}\label{assu:Levy}
For a fixed finite time horizon $T$, there exists $A>0$ such that 
$$
\int_1^\infty ze^{z A}\{\mu(dz) + \m(dz)\}<\infty
\qquad\text{and}\qquad
2G(T) + T(0\vee \R(A)) \leq A. 
$$
\end{assumption}
This framework yields the following result:
\begin{proposition}\label{affine.prop}
Under Assumption~\ref{assu:Levy}, the ordinary differential equations
$$
\partial_t \psi(t) = 2\|g(t)\|^2 + \R(\psi(t))
\qquad\text{and}\qquad
\partial_t \phi(t) = \F(\psi(t)),
$$
together with initial conditions $\psi(0) = \phi(0) = 0$, have solutions on $[0,T]$, 
with $0\leq \psi(\cdot)\leq A$ and $0\leq \phi(\cdot)\leq T\F(A)$,
and for $0\leq t\leq u\leq t+T$, 
$$
\EE^{(t,\gamma)}\left[\exp\left(2 \int_t^u \|g(u-s)\|^2 \Gamma_s ds\right)\right]
 = \exp\Big\{\gamma\psi(u-t) + \phi(u-t)\Big\}.
$$
\end{proposition}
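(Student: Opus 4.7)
The statement splits into two pieces: well-posedness of the ODE system for $(\psi,\phi)$ with the prescribed range, and the exponential-affine transform formula with a time-varying integrand. My plan is to handle the ODE part by Cauchy--Lipschitz combined with an a priori bound, and then prove the formula by constructing a candidate local martingale in the classical affine-process style and upgrading it to a true martingale using the exponential moment condition built into Assumption~\ref{assu:Levy}.

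\textit{Existence and bounds for $\psi,\phi$.} The exponential moment $\int_1^\infty e^{zA}\mu(dz)<\infty$ from Assumption~\ref{assu:Levy} ensures that $\R$ is real-analytic (in particular locally Lipschitz) on a neighbourhood of $[0,A]$, giving a unique maximal solution to $\partial_t\psi = 2\|g(t)\|^2 + \R(\psi(t))$ with $\psi(0)=0$. To contain the solution in $[0,A]$, use that $\R$ is convex with $\R(0)=0$, so $\R(\psi)\le 0\vee \R(A)$ on $[0,A]$; integrating as long as $\psi$ stays in this range gives
$$
\psi(t) \;\le\; 2G(t) + t(0\vee \R(A)) \;\le\; 2G(T) + T(0\vee \R(A)) \;\le\; A,
$$
by Assumption~\ref{assu:Levy}. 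Non-negativity follows from a standard comparison: since $\R(\psi)\ge -k\psi$ near $0$, we have $\psi'(t)\ge 2\|g(t)\|^2 - k\psi(t)$, and Gronwall gives $\psi\ge 0$. Hence $\psi$ cannot leave $[0,A]$ on $[0,T]$ and extends to the full interval. Setting $\phi(t):=\int_0^t \F(\psi(s))\,ds$ and noting that $\F'(u) = k\theta + \int_0^\infty ze^{zu}\m(dz)\ge 0$ (so $\F$ is non-decreasing on $\RR_+$) yields $0\le \phi(t)\le t\F(A)\le T\F(A)$.

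\textit{Affine formula via local martingale.} Fix $t\le u\le t+T$ and $\Gamma_t=\gamma$, and introduce, for $s\in[t,u]$,
$$
M_s \;:=\; \exp\left\{\Gamma_s\,\psi(u-s) + \phi(u-s) + 2\int_t^s \|g(u-r)\|^2 \Gamma_r\,dr\right\}.
$$
A direct generator computation gives $\mathcal{L}e^{\alpha\cdot}(x) = (\F(\alpha) + x\,\R(\alpha))e^{\alpha x}$. Applying the It\^o--Dynkin formula to $M$, its finite-variation drift reads
$$
M_s\left\{\Gamma_s\bigl[-\psi'(u-s)+2\|g(u-s)\|^2+\R(\psi(u-s))\bigr] + \bigl[-\phi'(u-s)+\F(\psi(u-s))\bigr]\right\}\,ds,
$$
which vanishes identically by the defining ODEs for $\psi$ and $\phi$. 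Hence $M$ is a non-negative local martingale. Since $M_u = \exp\bigl\{2\int_t^u \|g(u-r)\|^2 \Gamma_r\,dr\bigr\}$ and $M_t = \exp\{\gamma\psi(u-t)+\phi(u-t)\}$, the target identity is exactly the martingale relation $\EE^{(t,\gamma)}[M_u]=M_t$.

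\textit{Main obstacle: true martingale property.} The hard step is upgrading $M$ from a local to a genuine martingale. Localising via $\tau_n := \inf\{s:\Gamma_s\ge n\}$ produces a bounded martingale $M^{\tau_n}$, and the bounds $\psi\le A$, $\phi\le T\F(A)$ yield the pathwise domination
$$
M_{s\wedge\tau_n} \;\le\; e^{T\F(A)}\exp\left(A\,\Gamma_{s\wedge\tau_n} + 2\int_t^{s\wedge\tau_n}\|g(u-r)\|^2 \Gamma_r\,dr\right).
$$
My plan is to pass to the limit $n\to\infty$ by dominated/Vitali convergence, relying on the classical conservative-affine-process theory (Duffie--Filipovi\'c--Schachermayer, Keller--Ressel) which says that the exponential moment condition $\int_1^\infty ze^{zA}\{\mu(dz)+\m(dz)\}<\infty$ in Assumption~\ref{assu:Levy} is precisely what ensures $\sup_{s\in[t,u]}\EE^{(t,\gamma)}[e^{A\Gamma_s}]<\infty$; combining this with a Fubini argument to absorb the additional integrated-intensity factor delivers uniform integrability of $(M_u^{\tau_n})_n$ and yields $\EE^{(t,\gamma)}[M_u]=M_t$, completing the proof.
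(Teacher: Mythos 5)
Your martingale construction, the It\^o computation showing $M$ is a local martingale, and the a priori bounds $0\le\psi\le A$, $0\le\phi\le T\F(A)$ all match the paper's argument in spirit. The genuine gap is in the true-martingale step, which you correctly flag as the hard part but do not actually close. The paper recognises that $M/M_t$ is the stochastic exponential of $Z_s := \int_t^s \psi(u-r)\,d\Gamma^c_r + \int_t^s\int_0^\infty\left(e^{z\psi(u-r)}-1\right)\widetilde J_\Gamma(dr\times dz)$, which together with $\Gamma$ is a time-inhomogeneous affine process, and invokes Kallsen--Muhle-Karbe, Theorem~3.1; that result is designed precisely to convert existence of the non-exploding Riccati solution, which is what Assumption~\ref{assu:Levy} guarantees, into true martingality. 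Your plan instead leans on the claim $\sup_{s\in[t,u]}\EE^{(t,\gamma)}[e^{A\Gamma_s}]<\infty$, but this is \emph{not} a consequence of Assumption~\ref{assu:Levy}: the exponential moment $\EE[e^{A\Gamma_s}]$ is governed by the homogeneous Riccati equation $\partial_t\psi_A = \R(\psi_A)$ started at $\psi_A(0)=A$, whose solution can immediately leave $[0,A]$, the interval on which Assumption~\ref{assu:Levy} keeps $\R$ under control; whereas the ODE of the proposition starts from $0$, and the forcing term $2\|g\|^2$ is exactly what the hypothesis $2G(T)+T(0\vee\R(A))\le A$ budgets for. Moreover your pathwise dominator $e^{T\F(A)}\exp\bigl(A\Gamma_{s\wedge\tau_n}+2\int_t^{s\wedge\tau_n}\|g(u-r)\|^2\Gamma_r\,dr\bigr)$ depends on $n$, and extracting a single uniformly integrable majorant requires controlling the endpoint factor $A\Gamma_s$ and the accumulated integral \emph{simultaneously}; the Riccati equation trades one off against the other, so a Fubini--Jensen split does not obviously land below a level that the assumption certifies. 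Without a replacement such as the affine-martingale criterion the paper cites, or a comparison argument built on the Riccati solution itself, this step does not go through.

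A smaller technical remark on the ODE part: you invoke Cauchy--Lipschitz for $\partial_t\psi = 2\|g(t)\|^2 + \R(\psi)$, but $\|g(t)\|^2$ is only required to be locally integrable in $t$ (for the rough Bergomi kernel it blows up as $t\downarrow 0$), so the classical theorem with a continuous-in-time right-hand side does not apply directly. The paper's substitution $\varphi := \psi - 2G$ produces $\partial_t\varphi = \R(\varphi + 2G(t))$, whose right-hand side, after truncating $\R$ at level $A$, is continuous in $t$ and globally Lipschitz in $\varphi$, so Cauchy--Lipschitz applies cleanly; alternatively a Carath\'eodory-type existence theorem would rescue your formulation.
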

Here and after, we use the shorthand notation~$\EE^{(t,\gamma)}$ 
to denote expectation conditional on the event $\{\Gamma_t=\gamma\}$.
Likewise, the notation~$\Gamma^{(t,\gamma)}$ shall mean the process~$\Gamma$
started at~$\gamma$ at time~$t$.
\begin{proof}
We first show existence of solutions. 
With the function
$\varphi(t) := \psi(t) - 2G(t)$, the ODE in the proposition is equivalent to the following:
\begin{align}
\partial_t \varphi(t) = \R(\varphi(t) + 2G(t)).\label{barpsi}
\end{align}
Consider now the modified ODE
$$
\partial_t u(t) = \R((u(t)+ 2G(t))\wedge A).
$$
Since the right-hand side is Lipschitz in~$u$, by the
Cauchy-Lipschitz theorem, this equation admits a unique solution,
denoted by $\varphi^{A}(t)$. 
Since $\R(0)=0$, this solution is nonnegative. 
Moreover, in view of the convexity of~$\R$, it can be bounded from above as follows:
$$
\max_{0\leq t\leq T}\varphi^{A}(t) \leq T \max_{0\leq t\leq T}
\R\left(({\varphi^{A}(t)} + 2G(t))\wedge A\right)\leq T (0\vee \R(A)).  
$$
If $T (0\vee \R(A)) + 2G(T) \leq A$, 
then this solution coincides on $[0,T]$ with the solution of the original equation~\eqref{barpsi},
which therefore exists and is bounded from above. 

Let us now prove the formula for the Laplace transform. Consider the process
$$
M_s := \exp\left\{2\int_t^s \|g(u-r)\|^2 \Gamma^{(t,\gamma)}_r dr\right\} 
\exp\left\{\Gamma^{(t,\gamma)}_s \psi(u-s) +\phi(u-s)\right\},
\quad \text{for } t\leq s\leq u\leq T+t.
$$
By It\^o's formula,
$$
\frac{dM_s}{M_{s-}} = \psi(u-s) d\Gamma^c_s + \int_0^\infty
\left(e^{z\psi(u-s)}-1\right) \widetilde J_\Gamma (ds\times dz),
$$
where $\Gamma^c$ denotes the continuous martingale part of $\Gamma$, 
and $\widetilde J_\Gamma$ its compensated jump measure.
This means that $M$ is a local martingale. The process
$$
Z_s := \int_t^s \psi(u-r) d\Gamma^c_r + \int_t^s \int_0^\infty
\left(e^{z\psi(u-r)}-1\right) \widetilde J_\Gamma (dr\times dz),
$$
together with $\Gamma$, is a two-dimensional time-inhomogeneous affine process, 
which satisfies the conditions of~\cite[Theorem 3.1]{kallsen2010exponentially}, 
so that {its stochastic exponential, and therefore the process $M$,}
is a true martingale on $[t,u]$. 
Taking expectations conditional on $\{\Gamma_t=\gamma\}$, we obtain
$$
\exp\Big\{\gamma \psi(u-t) + \phi(u-t)\Big\}
 = \EE\left[\exp\left\{2\int_t^u \|g(u-s)\|^2 \Gamma^{(t,\gamma)}_s ds\right\}\right]. 
$$
\end{proof}
As in Section~\ref{toy.sec}, it is more straightforward to deal with the forward variance
$\xi_t(u) := \EE[\sigma_u^2|\mathcal F_t]$ than with the instantaneous
volatility. 
The following result follows by conditioning (for the first part) and by an application of It\^o's formula:
\begin{proposition}\label{prop:DynamicsFwdVar}
Under Assumption~\ref{assu:Levy}, the forward variance process is given by 
\begin{align*}
\xi_t(u) = \xi_0(u)\exp\left(2\int_{0}^t \sqrt{\Gamma_s} g(u-s)^\top
  dW_s+\psi(u-t) \Gamma_t + \phi(u-t) - \psi(u)\Gamma_0 - \phi(u)\right),
\end{align*}
for $0\leq t\leq u \leq T$, and its dynamics reads
$$
d\xi_t(u) = 2\xi_t(u)  \sqrt{\Gamma_t} g(u-t)^\top dW_t+ \xi_t(u)\psi(u-t)d\Gamma^c_t
 + \xi_{t-}(u)\int_{\RR}\left(e^{\psi(u-t) z}-1\right)\widetilde J_\Gamma(dt\times dz).
$$
\end{proposition}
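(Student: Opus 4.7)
The plan is to first derive the closed-form expression for $\xi_t(u)$ by successive conditioning, then to read off the semimartingale dynamics via It\^o's formula; the independence of $W$ and $\Gamma$ is the workhorse throughout. For the formula, I would split $X_u = \int_0^u \sqrt{\Gamma_s}g(u-s)^\top dW_s$ at time $t$, factorising $\sigma_u^2 = \Xi(u)^2 e^{2X_u}$ into an $\mathcal F_t$-measurable piece and a ``future'' Wiener exponential. Conditioning first on $\mathcal F_t \vee \sigma(\Gamma_r: t\leq r\leq u)$, the future piece is the exponential of a centred Gaussian with variance $4\int_t^u \Gamma_s\|g(u-s)\|^2 ds$, so its conditional expectation equals $\exp\bigl\{2\int_t^u \|g(u-s)\|^2 \Gamma_s\,ds\bigr\}$. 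A further conditioning on $\mathcal F_t$, together with the Markov property of $\Gamma$ and its independence from $W$, allows one to invoke the affine Laplace transform of Proposition~\ref{affine.prop} and replace the last quantity by $\exp\{\Gamma_t \psi(u-t) + \phi(u-t)\}$. Evaluating the resulting identity at $t=0$ yields $\xi_0(u) = \Xi(u)^2 \exp\{\Gamma_0 \psi(u) + \phi(u)\}$, and eliminating $\Xi(u)^2$ between the two expressions gives the announced closed form.

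For the dynamics I would write $\xi_t(u) = \xi_0(u) e^{L_t}$ with $L_t$ the exponent appearing in the formula, and apply the It\^o formula for jump semimartingales. The continuous martingale part of $L$ is $2\sqrt{\Gamma_t}g(u-t)^\top dW_t + \psi(u-t)\,d\Gamma^c_t$ (with no cross-variation between $W$ and $\Gamma^c$ by independence), its jumps are $\psi(u-t)\Delta\Gamma_t$, and its continuous finite-variation part contains $-\Gamma_t\psi'(u-t) - \phi'(u-t)$ plus $\psi(u-t)$ times the drift of $\Gamma$. Writing the compound-jump contribution as the integral of $e^{\psi(u-t)z}-1$ against the compensated measure $\widetilde J_\Gamma$ plus its compensator, the resulting SDE takes the announced form plus a residual finite-variation term $\mathcal D_t\,dt$.

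The main obstacle, and the only nontrivial computation, is to show that $\mathcal D_t\equiv 0$. I would split $\mathcal D_t$ into the terms proportional to $\Gamma_t$ and those free of it, then reshuffle the linear-in-$z$ contributions coming from the drift of $\Gamma$ against the $-\psi(u-t)z$ pieces in $e^{\psi z}-1-\psi z$. Recognising the definitions of $\R$ and $\F$ in the remainders, one obtains
$$
\mathcal D_t = -\Gamma_t\bigl[\psi'(u-t) - 2\|g(u-t)\|^2 - \R(\psi(u-t))\bigr] - \bigl[\phi'(u-t) - \F(\psi(u-t))\bigr],
$$
which vanishes identically by the ODEs of Proposition~\ref{affine.prop}; this both confirms the local-martingale character of $\xi_\cdot(u)$ already implied by the tower property and yields the stated dynamics. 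Assumption~\ref{assu:Levy} together with the bound $\psi\leq A$ ensures the integrability needed for every step above.
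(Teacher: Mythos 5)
Your proposal is correct and follows exactly the route the paper sketches (the paper only states that the result "follows by conditioning (for the first part) and by an application of It\^o's formula"): conditioning first on $\mathcal F_t\vee\sigma(\Gamma_r:t\leq r\leq u)$ to reduce the Wiener exponential, then invoking Proposition~\ref{affine.prop}, eliminating $\Xi(u)^2$ via the $t=0$ identity, and finally applying the jump It\^o formula to $\xi_0(u)e^{L_t}$ and verifying that the residual drift cancels term by term against the Riccati ODEs $\psi' = 2\|g\|^2 + \R(\psi)$ and $\phi' = \F(\psi)$. Your computation of $\mathcal D_t$ and its vanishing by those ODEs is precisely the verification the paper leaves implicit.
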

Similarly to the Gaussian Volterra case, the process~$X$ is non-Markovian, 
but the forward variance curve $(\xi_t(u))_{u\geq t}$ together with $\Gamma_t$ is Markovian and the current state of the forward variance curve, which is observable from option prices, and of $\Gamma_t$ determines the future dynamics. 
Mimicking Section~\ref{volterra.sec}, 
the value at time~$t$ of a Call option on the VIX is given by 
$$
P_t = \EE\left[f\left(\frac{1}{\Theta}\int_{T}^{T+\Theta} \xi_{T}(u) du\right) \Big|\mathcal  F_t\right]
 = F_{\Gamma}(t,\Gamma_t,\xi_t(u)_{T\leq u \leq T+\Theta}),
$$
with $f(x)=(\sqrt{x}-K)_+$, 
where $F_{\Gamma}$ is the deterministic map from $[0,T]\times\RR_+\times \Hh$ to $\RR$ defined by 
\begin{equation}\label{vixasianGamma}
F_{\Gamma}(t, \gamma, x) := \EE\left[f\left(\frac{1}{\Theta}\int_{T}^{T+\Theta}x(u) \Ee_{t,{T}}(\gamma,u) du\right) \right], \end{equation}
with
$$
\Ee_{t,T}(\gamma,u):= \exp\left(2\int_t^T \sqrt{\Gamma^{(t,\gamma)}_s} g(u-s)^\top dW_s
 + \psi(u-T)\Gamma^{(t,\gamma)}_T + \phi(u-T)-\psi(u-t)\gamma-\phi(u-t)\right).
$$

\begin{example}\label{ou.ex}
Let $\Gamma$ be a  L\'evy-driven positive Ornstein-Uhlenbeck process satisfying
$$
d\Gamma_t = -\lambda \Gamma_t dt + dL_t,\qquad \text{with }
\EE[e^{u L_t}] = \exp\Big(\Psi(u) t\Big),
$$
so that $\R(u) = -\lambda u$ and $\F(u) = \Psi(u)$. 
The assumptions of Proposition~\ref{affine.prop} are satisfied for any~$T$ such that $\Psi(2G(T))$ is finite, and in this case,
$$
\phi(t) :=\int_0^t \Psi\left(\psi(s)\right) ds
\qquad\text{and}\qquad 
\psi(t) =2\int_0^t e^{-\lambda(t-s)} \|g(s)\|^2 ds.
$$
From Proposition~\ref{prop:DynamicsFwdVar}, the dynamics of the forward variance process is therefore given by
$$
\frac{d\xi_t(u)}{\xi_t(u)} = 2  \sqrt{\Gamma_t} g(u-t)^\top dW_t+
\int_{\RR_+} \left\{\exp\left(2z\int_t^u e^{-\lambda(s-t)} \|g(u-s)\|^2ds\right)-1\right\}
\widetilde J_L(dt\times dz),
$$
where~$\widetilde J_L$ is the compensated jump measure of~$L$. 
Assume for example that~$L$ has jump intensity $\Lambda>0$ and exponential jump size distribution with parameter $a>0$, then 
$$
\Psi(u) = a \Lambda \int_0^\infty (e^{ux}-1) e^{-ax} dx =
\Lambda\left(\frac{a}{a-u}-1\right) = \frac{\Lambda u} {a-u},
\qquad\text{for all }u<a.
$$
\end{example}

\begin{example}\label{cirex}
Let $\Gamma$ be the CIR process with dynamics
$$
d\Gamma_t = k(\theta-\Gamma_t) dt + \delta \sqrt{\Gamma_t} dB_t,
$$
where $B$ is a standard Brownian motion independent of $W$. Then,
$\F(u) = k\theta u$ and $\R(u) = - k u + \frac{\delta^2}{2}u^2$.
Assumption \ref{assu:Levy} can be shown to be satisfied for any $T$ such that $4 G(T)
T \delta^2\leq 1$. Under this assumption, the dynamics of the forward
variance process is given by
$$
d\xi_t(u) = 2\xi_t(u)  \sqrt{\Gamma_t} g(u-t)^\top dW_t+
\xi_t(u)\psi(u-t)\delta \sqrt{\Gamma_t} dB_t,
$$
where $\psi$ is a deterministic function defined in Proposition~\ref{affine.prop}. 
In this case, $\xi(u)$ is a Heston process with uncorrelated stochastic volatility, 
so that, contrary to the jump case where the skew arises from downward jumps, 
we have here a symmetric implied volatility smile.
\end{example}
\paragraph{Martingale representation and hedging of VIX options}{
The martingale representation result in the presence of the extra risk
source is more involved and we provide it without proof, assuming
sufficient regularity to apply the It\^o formula.  Since the jump part of $\Gamma$ has finite variation, we may apply
the simplified version of It\^o formula which gives
\begin{align*}
dP_t &= \int_T^{T+\Theta}D_x F_\Gamma(t,\Gamma_t,
\xi_{t})(u) \xi_{t}(u)\left\{2\sqrt{\Gamma_t}g(u-t)^\top dW_t + \psi(u-t)
  d\Gamma^c_t \right\}du+ D_\Gamma
F_\Gamma(t,\Gamma_t,\xi_{t})d\Gamma^c_t \\
&\int_{\mathbb R}\left[
  F_{\Gamma}\left(t,\Gamma_{t-}+z,(\xi_{t-}e^{\psi(u-t)z})_{T\leq u\leq T+\Theta}\right)
 - F_{\Gamma}(t,\Gamma_{t-},\xi_{t-})\right]\widetilde{J}_\Gamma(dt\times dz).
\end{align*}
For the hedging portfolio with value 
$V_t = P_0+\int_0^T \int_T^{T+\Theta} \eta_s(u) d\xi_s(u)$, we have
$\EE^{\QQ}[(P_T - V_T)^2] = \EE\left[\int_0^T d\langle P-V\rangle_t\right]$,
where
\begin{align*}
&d\langle P-V\rangle_t = \left\langle P_t-\int_T^{T+\Theta} \eta_t(u)d\xi_t(u) \right\rangle_t\\
&  = 4 \Gamma_t \left\{\int_T^{T+\Theta}\left[D_x F_\Gamma(t,\Gamma_t,
\xi_{t})(u) - \eta_t(u)\right]\xi_{t}(u)g(u-t)^\top  du \right\}^2 dt \\
&+ \frac{\delta^2 \Gamma_t}{2}\left\{\int_T^{T+\Theta}
\Big(D_x F_\Gamma(t,\Gamma_t,
\xi_{t})(u) \xi_{t}(u)\psi (u-t)+ D_\Gamma F_\Gamma(t,\Gamma_t,
\xi_{t})  - \eta_t(u)\xi_t(u)\psi(u-t)\Big)  du \right\}^2 dt\\
& + \int_{\RR} (\mathrm{m}(dz) + \Gamma_t
  \mu(dz))\Big\{F_{\Gamma}(t,\Gamma_{t}+z,(\xi_{t}e^{\psi(u-t)z})_{T\leq
  u\leq T+\Theta})-F_{\Gamma}(t,\Gamma_{t},\xi_{t}) \\ &-
  \int_T^{T+\Theta} \eta_t(u) \xi_t(u) (e^{\psi(u-t)z}-1)du\Big\}^2dt.
\end{align*}
For perfect hedging, the following system must therefore hold:
\begin{equation*}
\left\{\begin{array}{rl}
\displaystyle\int_T^{T+\Theta}(D_x F_\Gamma(t,\Gamma_t,
\xi_{t})(u) - \eta_t(u))\xi_{t}(u)g(u-t)^\top  du & = 0,\\
\displaystyle\int_T^{T+\Theta}(D_x F_\Gamma(t,\Gamma_t,
\xi_{t})(u) \xi_{t}(u)\psi (u-t)+ D_\Gamma F_\Gamma(t,\Gamma_t,
\xi_{t})  - \eta_t(u)\xi_t(u)\psi(u-t))  du & = 0,\\
\displaystyle F_{\Gamma}(t,\Gamma_{t}+z,(\xi_{t}e^{\psi(u-t)z})_{T\leq
  u\leq T+\Theta})-F_{\Gamma}(t,\Gamma_{t},\xi_{t}) -
  \int_T^{T+\Theta} \eta_t(u) \xi_t(u) (e^{\psi(u-t)z}-1)du & = 0,
\end{array}
\right.
\end{equation*}
where the last one is to hold for all $z$ on the support of $\mathrm m + \mu$. 
Assume first that the process $\Gamma$ has no jump part. Then, only
the first two equations must be solved: this is possible whenever one
can find numbers $u_1,\dots,u_{d+1}\in [T,T+\Theta]$ such that the
vectors $(g(u_j-t)^1,\dots, g(u_j-t)^d,\psi(u_j-t))$ for
$j=1,\dots,d+1$ are linearly independent for all $t\in [0,T]$. In this
case, despite the presence of stochastic volatility, a VIX option may
be perfectly hedged using only forward variance curve: in the interest
rate language, there is no \emph{unspanned stochastic volatility}. 

In the presence of jumps, the hedging problem is more complex. When
the support of $\mathrm m + \mu$ contains only a finite number of
points, $z_1,\dots,z_k$, perfect hedging is possible whenever one
can find numbers $u_1,\dots,u_{d+k+1}\in [T,T+\Theta]$ such that the
vectors $(g(u_j-t)^1,\dots,
g(u_j-t)^d,\psi(u_j-t),e^{\psi(u_j-t)z_1},\dots, e^{\psi(u_j-t)z_k})$
are linearly independent for all $t\in [0,T]$. However, the hedge
ratios will be unstable when the corresponding matrix is close to
being singular. The method may therefore work when the number of
points in the support of $\mathrm m + \mu$ is small, but will be
difficult to implement for a large number of points, and a fortiori
when trying to approximate a continuous jump size distribution with a
discrete one. 

\begin{example}
In the CIR case (Example~\ref{cirex}), 
the VIX option price has the martingale representation
\begin{align*}
dP_t &= \left\{2\int_T^{T+\Theta}D_x F_\Gamma(t,\Gamma_t,
\xi_{t})(u) \xi_{t}(u) g(u-t)^\top  du\right\} \sqrt{\Gamma_t} dW_t \\
& +\left\{ D_\Gamma
F_\Gamma(t,\Gamma_t,\xi_{t})+\int_T^{T+\Theta}D_x F_\Gamma(t,\Gamma_t,
\xi_{t})(u) \xi_{t}(u) \psi(u-t)du
  \right\} \delta \sqrt{\Gamma_t} dB_t.
\end{align*}
On the other hand, considering the continuous version of the variance swap, 
the dynamics of the forward variance swap between~$T$ and~$T+\Theta$ is given by 
$$
dS^{T,\Theta}_t = \left\{\frac{2}{\Theta}\int_T^{T+\Theta}\xi_t(u)
g(u-t)^\top du\right\} \sqrt{\Gamma_t} dW_t+
\left\{\frac{\delta}{\Theta}\int_T^{T+\Theta}\xi_t(u)\psi(u-t)du\right\} \sqrt{\Gamma_t} dB_t.
$$
It is thus clear that we can construct a portfolio of two variance swaps with
different maturities which will perfectly offset the risk of a VIX
option. 
\end{example}
}

\subsection{Pricing VIX options by Monte Carlo}\label{montecarlo2.sec}
We extend here the numerical analysis from Section~\ref{montecarlo.sec}
to the modulated Volterra case, 
essentially based on deriving an approximation for~\eqref{vixasianGamma}. 
The two discretisation schemes are adapted as follows:
\begin{itemize}
\item the rectangle scheme (with $\zeta^n_i$ and $\eta^n(u)$ defined as before):
$$
F_n(t,\gamma,x)
 :=\EE\left[f\left(\frac{1}{\Theta}\sum_{i=0}^{n-1}\zeta^n_i  \Ee_{t,T}(\gamma,t^n_i)\right) \right] = \EE\left[f\left(\frac{1}{\Theta}\int_{T}^{T+\Theta}x(u) \Ee_{t,T}(\gamma,\eta^n(u)) du\right) \right];
$$
\item the trapezoidal scheme:
\begin{align*}
\widehat F_n(t,\gamma,x)
 & := \EE\left[f\left(\frac{1}{\Theta}\sum_{i=0}^{n-1}\int_{t^n_i}^{t^n_{i+1}} x(u) (
    \theta^n (u)\Ee_{t,T}(\gamma,t^n_i) + (1-\theta^n(u)) \Ee_{t,T}(\gamma,t^n_{i+1}) )du\right) \right]\\
& = \EE\left[f\left(\frac{1}{\Theta}\int_{T}^{T+\Theta} x(u) (
    \theta^n (u)\Ee_{t,T}(\gamma,\eta^n(u)) + (1-\theta^n(u)) \Ee_{t,T}(\gamma,\overline{\eta}^n(u)) )du\right) \right].
\end{align*}
Similarly to the previous case, the sequence of random variables $(Z_i)_{i=0}^{n-1}$ with 
$$
Z_{i}:= \log \Ee_{t,T}(\gamma,t_i)
 = 2\int_{t}^T \sqrt{\Gamma^{(t,\gamma)}_s} g(t_i-s)^\top  dW_s  + m_i,
$$
forms a conditionally Gaussian random vector (when conditioning on the trajectory $\left(\Gamma_{s}\right)_{t\leq s \leq T}$, given $\Gamma_t=\gamma$) with mean~$m_i$ 
and covariance~$C_{i,j}$ given by
\begin{align*}
m_i & = \psi(t_i-T) \Gamma^{(t,\gamma)}_T + \phi(t_i-T) - \psi(t_i-t)\gamma - \phi(t_i-t),\\
C_{ij} & = 4\int_t^T \Gamma_s g(t^n_i-s)^\top g(t^n_j-s) ds.
\end{align*}
\end{itemize}
The Monte Carlo computation is then implemented in two successive steps:
\begin{itemize}
\item For the covariances $C_{ij}$: 
having simulated a trajectory $\left(\Gamma_{s}\right)_{t\leq s \leq T}$ 
we compute the conditional covariances~$C_{ij}$ given~$\Gamma$. 
When~$\Gamma$ is a L\'evy-driven OU process with finite jump intensity, this simulation does not induce a discretisation error, since the integral describing~$C_{i,j}$ is in fact a finite sum over the jumps of the L\'evy process. Otherwise, we need to simulate a discretised trajectory of~$\Gamma$, but this simulation is fast, since~$\Gamma$ is Markovian (see below). 
\item Simulate the random Gaussian vector $(Z_i)_{i=0}^{n-1}$ and compute the option pay-off. 
\end{itemize} 

The following proposition extends Proposition~\ref{mc.prop} 
and characterises the convergence rate of these two discretisation schemes, 
assuming that the simulation of $C_{ij}$ is done without error. 
\begin{proposition}\label{mc2.prop}
Let Assumption~\ref{assu:Levy} hold.
Assume further that $f$ is Lipschitz, $x(\cdot)$ bounded, and that there exist
$c<\infty$ and $\beta>0$ such that for all $T\leq t_1 <t_2$, 
$$
\left(\int_t^T \|g(t_2-s)-g(t_1-s)\|^2ds\right)^{1/2} \leq c(t_2-t_1) (t_2-T)^{\beta-1}. 
$$
\begin{itemize}
\item
If 
\begin{equation}\label{expmoment}
\EE\left[\exp\left(8\int_t^u \Gamma^{(t,\gamma)}_s \|g(u-s)\|^2ds\right)\right]
\end{equation} 
is bounded uniformly for $u\in[T,T+\Theta]$, and 
$\EE[(\Gamma^{(t,\gamma)}_s)^2]$ is bounded uniformly for $s\in [t,T]$,
then on~$\Tt_1$, 
$|F(t,x)-F_n(t,x)|= \mathcal{O}\left(\frac{1}{n}\right)$;
\item if, in addition to the above assumptions, $\EE[(\Gamma^{(t,\gamma)}_s)^4]$ is bounded uniformly for $s\in [t,T]$,
$\|g\|$ is positive and decreasing, and there exists $c<\infty$ such that for all 
$T\leq t_1 \leq t_2<t_3$,
\begin{align*}
&\|g(t_1-T)\|^2 \leq c (t_1-T)^{\beta-1},\\
&\|g(t_1-T)\|^2 - \|g(t_2-T)\|^2 \leq C(t_2-t_1)(t_1-T)^{\beta-2},\\
&\left(\int_t^T \|g(t_2-s)-\frac{t_3-t_2}{t_3-t_1}g(t_1-s)- \frac{t_2-t_1}{t_3-t_1}g(t_3-s)\|^2ds\right)^{1/2}
\leq c(t_3-t_1)^2 (t_3-T)^{\beta-2}. 
\end{align*}
Then on~$\Tt_{\kappa}$ with $\kappa(\beta+1)>2$, 
$|F(t,x)-\widehat F_n(t,x)|= \mathcal{O}\left(\frac{1}{n^2}\right)$.
\end{itemize}
\end{proposition}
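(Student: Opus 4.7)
The plan is to follow the architecture of Proposition~\ref{mc.prop}, while handling the extra stochastic factor~$\Gamma$ that multiplies the Brownian integrand. Writing $\Ee_{t,T}(\gamma,u) = e^{A(u)}$ with
\[
A(u) := 2\int_t^T \sqrt{\Gamma_s}\,g(u-s)^\top dW_s + \psi(u-T)\Gamma_T + \phi(u-T) - \psi(u-t)\gamma - \phi(u-t),
\]
the random variable $A(u)$ is Gaussian conditional on the trajectory of~$\Gamma$, since $W$ and $\Gamma$ are independent. By the Lipschitz property of~$f$ and the boundedness of~$x$, both errors are dominated by a constant multiple of
\[
\int_T^{T+\Theta}\EE\bigl[|\Ee_{t,T}(\gamma,u) - \Ee^{(n)}_{t,T}(\gamma,u)|\bigr]du,
\]
where $\Ee^{(n)}$ is the piecewise-constant (rectangle) or piecewise-linear (trapezoidal) approximation. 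The elementary inequality $|e^a - e^b|\leq |a-b|(e^a+e^b)$ combined with Cauchy--Schwarz reduces the task to estimating $L^2$ moments of differences of $A$, while the exponential-moment hypothesis \eqref{expmoment} and the martingale identity $\EE[\Ee_{t,T}(\gamma,u)]=1$ provide uniform control of the multiplicative factor $\EE[(e^{A(u)} + e^{A(u')})^2]^{1/2}$.

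For the rectangle scheme on $\Tt_1$, the conditional It\^o isometry and the independence of $W$ from $\Gamma$ give
\[
\EE[(\widetilde A(u)-\widetilde A(u'))^2 \, | \, \Gamma] = 4\int_t^T \Gamma_s\|g(u-s)-g(u'-s)\|^2 ds,
\]
with $\widetilde A$ the stochastic-integral part of $A$. Taking expectations, using the uniform $L^2$ bound on $\Gamma^{(t,\gamma)}_s$ and the first kernel-regularity assumption, one obtains $\EE[(A(u)-A(u'))^2]^{1/2}\leq C(u-u')(u-T)^{\beta-1}$, since the deterministic mean contributions involving $\psi,\phi$ are Lipschitz in~$u$ (as $\psi,\phi$ satisfy the smooth ODEs of Proposition~\ref{affine.prop}) and hence of strictly higher order. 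With $u-\eta^n(u)\leq\Theta/n$ and $(u-T)^{\beta-1}$ integrable on $[T,T+\Theta]$, this yields the $\mathcal{O}(1/n)$ bound.

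For the trapezoidal scheme, the algebraic identity
\[
e^a - \theta e^b - (1-\theta)e^c = \bigl(e^a - e^{\theta b+(1-\theta)c}\bigr) - \bigl[\theta e^b+(1-\theta)e^c - e^{\theta b+(1-\theta)c}\bigr]
\]
splits the integrand into a first-order piece controlled by the linear interpolation error $|a - \theta b - (1-\theta)c|$ and a convexity residual bounded by a constant times $(b-c)^2\max(e^b,e^c)$. Applying this with $a=A(u)$, $b=A(\eta^n(u))$, $c=A(\overline{\eta}^n(u))$, $\theta=\theta^n(u)$, the second-order kernel-regularity assumption combined with the $L^2$ moment bound on $\Gamma$ yields
\[
\EE\bigl[(A(u) - \theta A(\eta^n(u)) - (1-\theta)A(\overline{\eta}^n(u)))^2\bigr]^{1/2} \leq C(\overline{\eta}^n(u)-\eta^n(u))^2(\overline{\eta}^n(u)-T)^{\beta-2};
\]
the stronger $L^4$ bound on $\Gamma$ enters when Cauchy--Schwarz is applied to the convexity residual, as one needs to estimate the fourth moment of $A(\eta^n(u)) - A(\overline{\eta}^n(u))$ against the $L^2$ norm of the exponential factor. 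The pointwise bounds on $\|g(t_1-T)\|^2$ handle the boundary subinterval $[T,t_1^n]$, where the second-order kernel bound degenerates.

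The main technical obstacle is summing these per-subinterval contributions across the non-uniform grid~$\Tt_\kappa$: the weight $(\overline{\eta}^n(u)-T)^{\beta-2}$ fails to be integrable near~$T$ when $\beta<2$, and the grid refinement is tuned precisely to compensate. With $t^n_{i+1}-t^n_i \sim \kappa\Theta\, i^{\kappa-1}n^{-\kappa}$ and $t^n_i-T\sim \Theta(i/n)^\kappa$, the contribution from $[t^n_i,t^n_{i+1}]$ is of order $n^{-\kappa(\beta+1)} i^{\kappa(\beta+1)-3}$; summing over $i\in\{1,\ldots,n-1\}$ under $\kappa(\beta+1)>2$ gives $\sum_i i^{\kappa(\beta+1)-3}\sim n^{\kappa(\beta+1)-2}$, which delivers the claimed $\mathcal{O}(1/n^2)$ rate.
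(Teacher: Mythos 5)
The overall architecture you propose — decompose $\log\Ee_{t,T}(\gamma,u)$ into a conditionally Gaussian stochastic-integral part and an affine part driven by $\psi,\phi$, apply $|e^a-e^b|\leq|a-b|(e^a+e^b)$ and Cauchy--Schwarz, then sum over the non-uniform grid — is indeed the skeleton of the paper's argument, and your grid-summation step at the end is correct. The gap is in the middle: you dispose of the deterministic part by asserting that \emph{``the deterministic mean contributions involving $\psi,\phi$ are Lipschitz in $u$ \dots and hence of strictly higher order.''} This is false in the regime of interest. Since $\psi'(t) = 2\|g(t)\|^2 + \R(\psi(t))$ and $\|g(t)\|^2 \sim t^{\beta-1}$ blows up at $t=0$ when $\beta<1$, the function $\psi$ is only $\beta$-H\"older (not Lipschitz) at the origin, and $\psi(u-T)$ is evaluated precisely near that singularity. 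Pointwise, $|\psi(u-T)-\psi(t^n_i-T)|$ is of order $\int_{t^n_i-T}^{u-T}\|g(s)\|^2ds$, which on the first cell is $\asymp n^{-\beta}$, much larger than $n^{-1}$. The rectangle estimate is rescued only by integrating in $u$ and applying Fubini, which the paper does explicitly: the $Z^2$ contribution integrates to $\frac{\Theta}{n}\int_0^\Theta\|g(s-T)\|^2ds = \mathcal O(1/n)$. Your argument never performs this step, so the first-order estimate is not actually established.

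The same omission is more serious for the trapezoidal scheme. There one needs second-order control of the linear-interpolation error of $\psi(\cdot-T)$, i.e.\ of $G(\cdot-T)$, and since $\psi'$ is singular at $0$, a Taylor argument is unavailable. The paper's route is to use that $\|g\|$ decreasing makes $G$ concave, so the interpolation defect $|G(u-T) - \theta^n(u)G(t^n_i-T) - (1-\theta^n(u))G(t^n_{i+1}-T)|$ can be bounded by $\frac{(t^n_{i+1}-t^n_i)^2}{8}\bigl(\|g(t^n_i-T)\|^2-\|g(t^n_{i+1}-T)\|^2\bigr)$, and the remaining two hypotheses on $\|g(t-T)\|^2$ convert this into $C(t^n_{i+1}-t^n_i)^3(t^n_{i+1}-T)^{\beta-2}$, with $i=0$ treated separately via the first pointwise bound. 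In your write-up those two extra kernel hypotheses and the monotonicity of $\|g\|$ are attributed to ``the boundary subinterval where the second-order kernel bound degenerates,'' which misidentifies their role: they are needed on every cell, and exclusively for the affine part $Z^2$, not for the Brownian part. Without the concavity argument the trapezoidal rate $\mathcal O(1/n^2)$ is not reached.

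Aside from this, your treatment of the Brownian part (conditional It\^o isometry, uniform $L^2$/$L^4$ moments of $\Gamma$, exponential-moment control via~\eqref{expmoment}) matches the paper and is correct. To repair the proposal, replace the Lipschitz claim by the Fubini estimate on $\int_{t^n_i-T}^{u-T}\|g\|^2$ for the rectangle case and by the concavity-of-$G$ argument for the trapezoidal case, and reassign the three trapezoidal kernel hypotheses to the affine-part estimate where they belong.
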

\begin{remark}
Similarly to Proposition~\ref{affine.prop}, it can be shown that a
sufficient condition for~\eqref{expmoment} to be bounded uniformly on
$u\in[T,T+\Theta]$ is that $\int_1^\infty z e^{zA}\{\mu(dz) + \m(dz)\}$ is finite and, for some $A>0$,
$$
8G(T+\Theta-t) + (T+\Theta-t)(0\vee \R(A))<A.
$$
Moreover, the exponential integrability of jump sizes ensures that $\EE[(\Gamma^{(t,\gamma)}_s)^4]$
is uniformly bounded. 
\end{remark}

\subsection{Approximate pricing and control variate}\label{approx.sec}
As in the Gaussian Volterra case, we can obtain a simple approximate
formula for the VIX option price by replacing the integral over a family
of conditionally lognormal random variables by the exponential of
the integral of their logarithms. 
In other words, we replace the squared VIX index~\eqref{eq:VIXFormula}
by the approximation~\eqref{eq:ApproxVIXSquared}.
From the expression of the forward variance in Proposition~\ref{prop:DynamicsFwdVar}, we have
\begin{align*}
\log \overline{\VIX}^2_T &= \frac{1}{\Theta}\int_T^{T+\Theta} \log \xi_t(u) du + 2\int_t^T
  \Gamma^{(t,\gamma)}_s  dW_s \frac{1}{\Theta}\int_T^{T+\Theta} g(u-s)^\top
  du 
  \\ &+\frac{\Gamma^{(t,\gamma)}_T}{\Theta}\int_T^{T+\Theta}\psi(u-T)du
  + \frac{1}{\Theta}\int_T^{T+\Theta}(\phi(u-T)-\phi(u-t))
  du-\frac{\gamma}{\Theta}\int_T^{T+\Theta}\psi(u-t)
  du
\end{align*}
The characteristic exponent of $\log \overline{\VIX}^2_T$ is therefore given by 
\begin{align*}
\Psi(z) & := \log\EE\left[\left.e^{iz \log \overline{\VIX}^2_T}\right|\mathcal F_t\right]\\
 & = 
\frac{iz}{\Theta}\int_T^{T+\Theta} \log \xi_t(u) du
 + \frac{iz}{\Theta}\int_T^{T+\Theta}(\phi(u-T)-\phi(u-t))du
 - \frac{iz\gamma}{\Theta}\int_T^{T+\Theta}\psi(u-t) du\\
& + \log \EE\left[\exp\left(-2 z^2 \int_t^T\Gamma_s^{(t,\gamma)} G^2(s,T,\Theta) ds
 + \frac{iz \Gamma^{(t,\gamma)}_T}{\Theta}\int_T^{T+\Theta}\psi(u-T)du\right)\right],
\end{align*}
where $G(s,T,\Theta ) := \frac{1}{\Theta}\int_{T}^{T+\Theta} g(u-s)du$.
Similarly to Proposition~\ref{affine.prop}, under appropriate
integrability conditions this expectation can be reduced to ordinary differential equations.
In the context of the L\'evy-driven OU process of Example~\ref{ou.ex}, 
the computation can be done explicitly. In this case, 
$$
\Gamma^{(t,\gamma)}_s = \gamma e^{-\lambda(s-t)} + \int_t^s
e^{-\lambda(s-r)} dL_r,
$$
so that 
\begin{align*}
 &\EE\left[\exp\left(-2 z^2 \int_t^T  \Gamma_s^{(t,\gamma)} G^2(s,T,\Theta) ds
  + \frac{iz \Gamma^{(t,\gamma)}_T}{\Theta}\int_T^{T+\Theta}\psi(u-T)du\right)\Big|\mathcal F_t\right]\\
& = \exp\left(-2 z^2 \int_t^T \gamma e^{-\lambda(s-t)}  G^2(s,T,\Theta) ds
 + \frac{iz \gamma e^{-\lambda(T-t)} }{\Theta}\int_T^{T+\Theta}\psi(u-T)du\right)\\
& \times \EE\left[\exp\left(-2 z^2 \int_t^T dL_r \int_r^Te^{-\lambda(s-r)}  G^2(s,T,\Theta) ds
 + \frac{iz \int_t^T e^{-\lambda(T-r)} dL_r}{\Theta}\int_T^{T+\Theta}\psi(u-T)du\right)\right]\\
& = \exp\left(-2 z^2 \int_t^T \gamma e^{-\lambda(s-t)}  G^2(s,T,\Theta) ds
 + \frac{iz \gamma e^{-\lambda(T-t)} }{\Theta}\int_T^{T+\Theta}\psi(u-T)du\right)\\
&\times \exp\left(\int_t^T\Psi\left(-2 z^2 \int_r^T e^{-\lambda(s-r)}  G^2(s,T,\Theta) ds
 + \frac{iz e^{-\lambda(T-r)}}{\Theta}\int_T^{T+\Theta}\psi(u-T)du\right)dr\right)
\end{align*}

\subsection{Numerical illustration}
We consider here the volatility-modulated model~\eqref{modulated.eq}
with power law kernel $g(t,s) = (t-s)^{H-\frac{1}{2}}$,
and where $\Gamma$ is the L\'evy-driven OU process 
from Example~\ref{ou.ex}. 
Using the Monte Carlo method in Section~\ref{montecarlo2.sec},
Proposition~\ref{mc2.prop} holds provided $\Psi(8G(T+\Theta-t))<\infty$, 
so that the convergence rate is $n^{-1}$ for the rectangle scheme, 
and $n^{-2}$ for the trapezoidal scheme with appropriate discretisation grids. 
The driving L\'evy process~$L$ 
is a compound Poisson process with one-sided exponential jumps\footnote{Empirical tests suggest that double-sided jumps do not significantly improve the calibration.}; 
the model parameters are $\lambda$ (mean reversion of the OU process), 
$\Lambda$ (jump intensity), $a$ (parameter of the exponential law), 
$\gamma$ (initial value of the OU process), 
$\xi_0(T)$ (initial forward variance curve), and~$H$ 
(fixed to the value $0.1$ in accordance with the findings in~\cite{gatheral2014volatility}). 
Figure~\ref{im:Figureapproxaccurate} indicates that the approximation formula in Section~\ref{approx.sec} is very accurate;
compared to Monte Carlo schemes with $90$ approximation steps (Ndisc on the plots),
our approximation has the benefit of being much faster to compute. 
This convergence is illustrated with the following parameters:
maturity is one month, and 
$(\lambda, \Lambda, a, \gamma, \xi_0(T)) = (0.08, 0.71, 6.18, 0.05, 0.013)$
(which corresponds to a set of calibrated parameters below).

\begin{figure}[h!]
\begin{center}\includegraphics[scale=0.35]{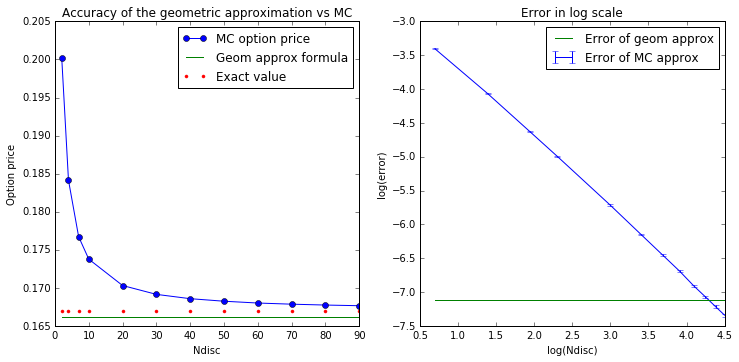}\end{center}
\caption{Accuracy of the approximation formula as compared to the Monte Carlo scheme.}
\label{im:Figureapproxaccurate}
\end{figure}


We calibrate the model to VIX options on May 14, 2014, for five maturities,
using the approximate pricing formula of Section~\ref{approx.sec}, 
by minimising the sum of squared differences between market prices and model prices, 
using the L-BFGS-B algorithm (Python \texttt{optimize} toolbox). 

\subsubsection{Slice by slice calibration}
In this test each maturity has been calibrated separately, 
and the forward variance value~$\xi_0(T)$ for each maturity 
has also been calibrated to VIX options.
The calibration results are shown in Figure~\ref{calib.fig}, and the calibrated 
parameters in Table~\ref{calib.tab}.
The error is the square root of the mean square error of option prices (in USD).
The calibration time on a standard PC ranges from 20 to 100 seconds,
depending on the starting value of parameters. 
The calibration quality shows an overall error below~$3$ cents, 
and the parameters appear to be reasonably stable over all maturities.

\begin{table}[h!]
\centerline{\begin{tabular}{c|ccccc|c}
Maturity (in days) & $\lambda$ & $\Lambda$ & $a$ & $\gamma$ & $\xi_0(T)$  & Error\\
\hline
7 & 0.08 & 0.71 & 6.18 & 0.05 & 0.013 & 0.005 \\ 
35 & 0.01 & 5.82 & 19.81 & 0.007 & 0.014 & 0.03 \\ 
63 & 0.01& 6.61 & 25.41 & 0.01 & 0.012 & 0.16 \\ 
98 & 0.01& 5.63 & 28.70 & 0.001 & 0.012 & 0.008 \\ 
126 &  0.92 & 4.97 & 25.19 & 0.001 & 0.011 & 0.023 \\ 
\hline  
\end{tabular}}
\caption{Calibrated parameter values corresponding to Figure~\ref{calib.fig}.}
\label{calib.tab}
\end{table}

\begin{figure}
\centerline{\includegraphics[scale=0.4]{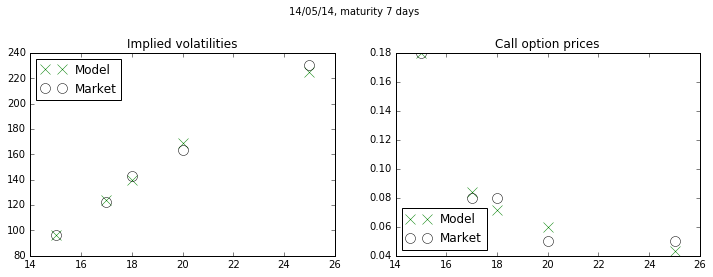}}
\centerline{\includegraphics[scale=0.4]{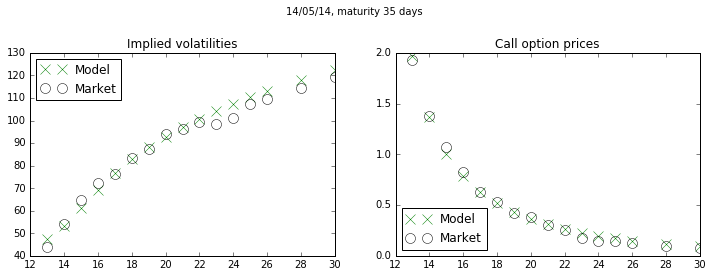}}
\centerline{\includegraphics[scale=0.4]{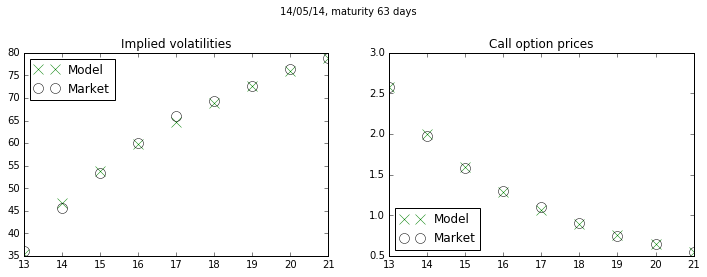}}
\centerline{\includegraphics[scale=0.4]{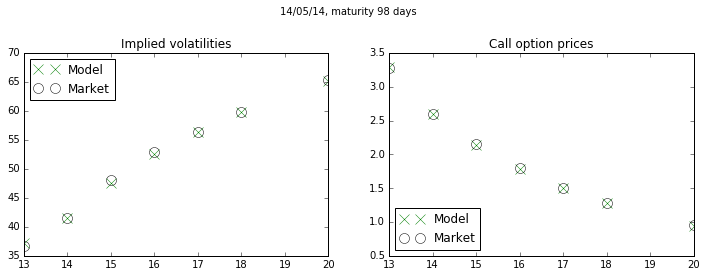}}
\centerline{\includegraphics[scale=0.4]{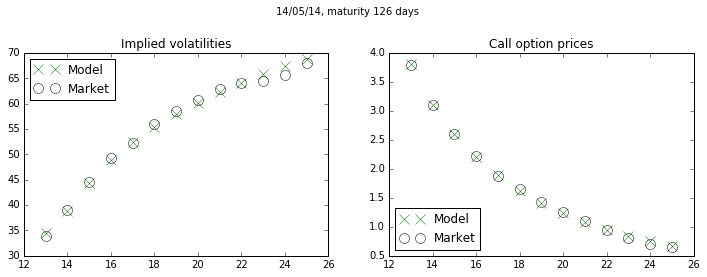}}
\caption{Option prices and implied volatilities of VIX options observed in the market 
calibrated (slice by slice) from the model with one-sided exponential jumps. 
From top to bottom, the maturities are $7$, $35$, $63$, $98$, and $126$ days. 
The forward variance is calibrated to VIX option prices.}
\label{calib.fig}
\end{figure}

\subsubsection{Slice by slice calibration with pre-specified forward variance curve}
We now consider a joint calibration procedure, where each maturity is calibrated separately, 
but the forward variance~$\xi_0(\cdot)$ is 
computed from SPX option prices using the VIX replication formula. 
Figure~\ref{calib2.fig} shows the results of the calibration, 
with calibrated parameters in Table~\ref{calib2.tab}.
The pricing errors are now slightly larger, yet still acceptable. 

\begin{table}[h!]
\centerline{\begin{tabular}{c|ccccc|c}
Maturity (in days) & $\lambda$ & $\Lambda$ & $a$ & $\gamma$ & $\xi_0(T)$  & Error \\
\hline
7 & 0.086 & 0.583 & 5.410 &  0.272 & 0.013 & 0.013 \\ 
35 & 0.008 & 0.597 & 9.903 &  0.088 & 0.018 & 0.041 \\ 
63 & 0.01 &  0.08 &  15.24 &  0.13 & 0.022 & 0.066 \\ 
98 & 0.009 & 0.06 &  0.028 &  0.11 & 0.027 & 0.095 \\ 
126 &  0.922 & 0.094 & 0.001 &  0.149 & 0.030 & 0.074 \\ 
\hline
\end{tabular}}
\caption{Calibrated parameter values corresponding to Figure~\ref{calib2.fig}.}
\label{calib2.tab}
\end{table}

\begin{figure}
\centerline{\includegraphics[scale=0.4]{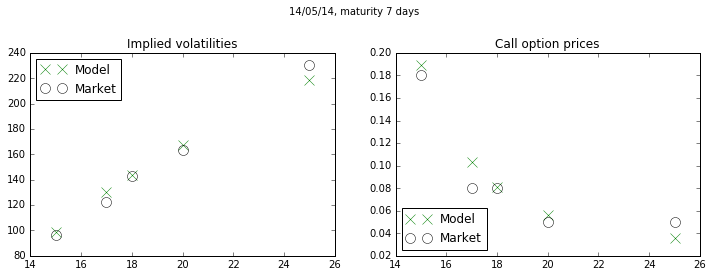}}
\centerline{\includegraphics[scale=0.4]{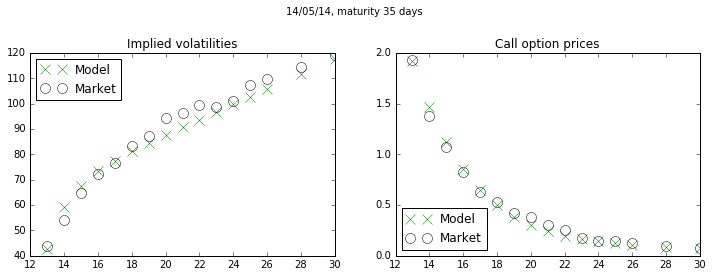}}
\centerline{\includegraphics[scale=0.4]{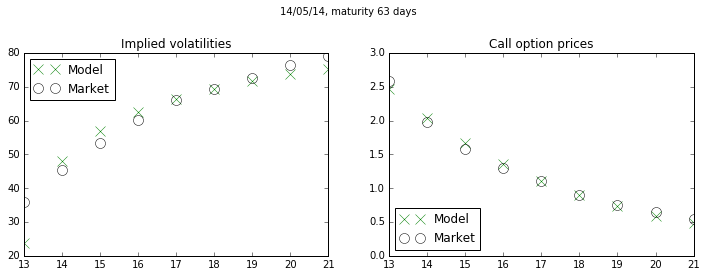}}
\centerline{\includegraphics[scale=0.4]{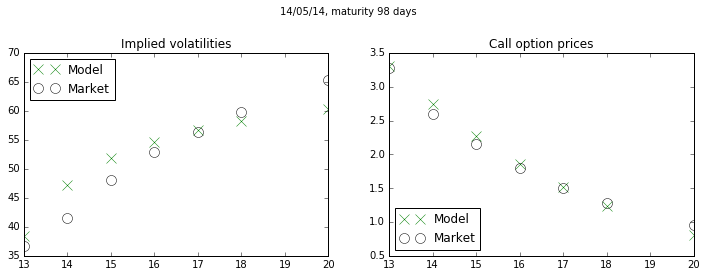}}
\centerline{\includegraphics[scale=0.4]{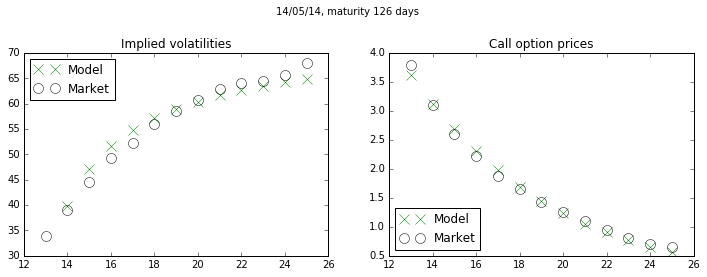}}
\caption{Joint calibration on VIX Smile and term structure of the SPX implied volatility.}
\label{calib2.fig}
\end{figure}

\newpage
\subsubsection{Joint calibration to several maturities}
We finally test the calibration over several maturities at the same time.
Figure~\ref{calib3.fig} shows the result of the simultaneous calibration to three maturities ($35$, $63$ and~$98$ days), 
where the forward variance is calibrated separately from SPX option prices as in Table~\ref{calib2.tab}. 
The optimal parameters and errors are
$(\lambda, \Lambda, a, \gamma) = (0.29771, 0.915, 9.576, 0.028)$, and
\begin{table}[h!]
\centerline{\begin{tabular}{c|ccccc|c}
\hline
Maturity (in days) & $35$ & $63$ & $98$ \\
\hline
Error  & 0.084 & 0.12 & 0.15 \\
\hline
\end{tabular}}
\label{calib3.tab}
\end{table}

\begin{figure}[h!]
\centerline{\includegraphics[scale=0.4]{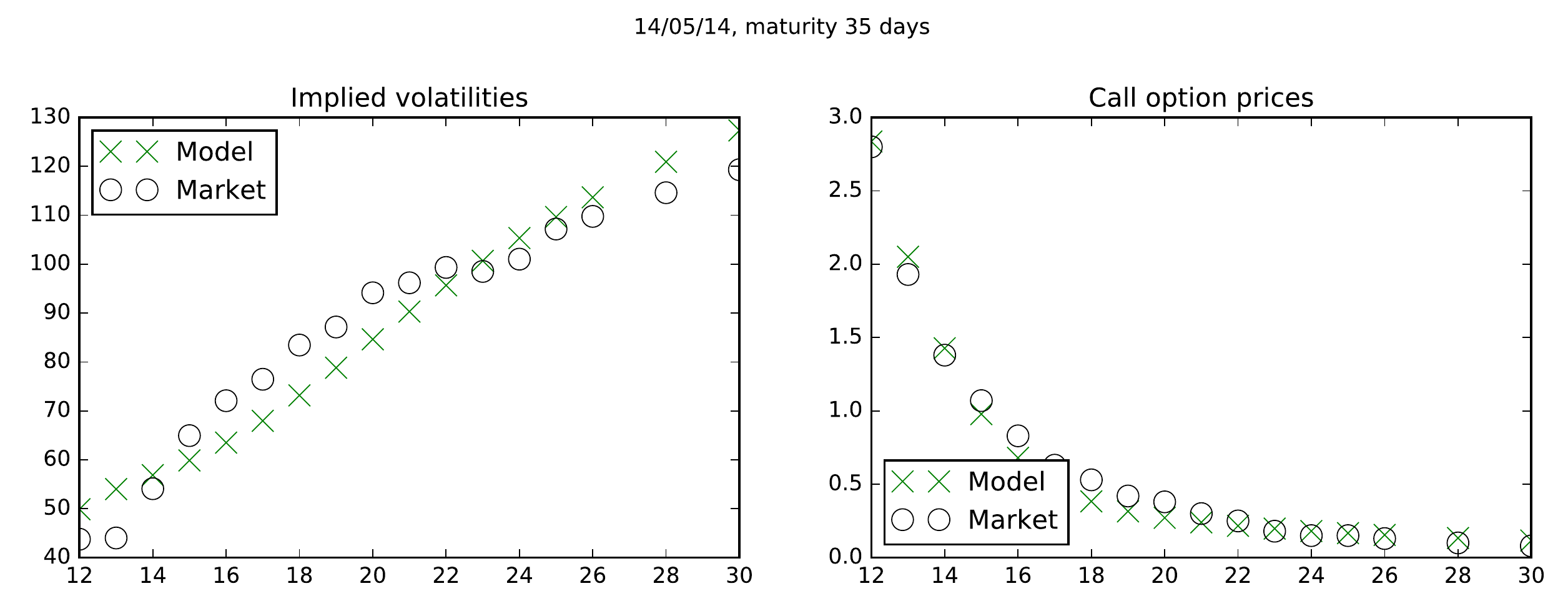}}
\centerline{\includegraphics[scale=0.4]{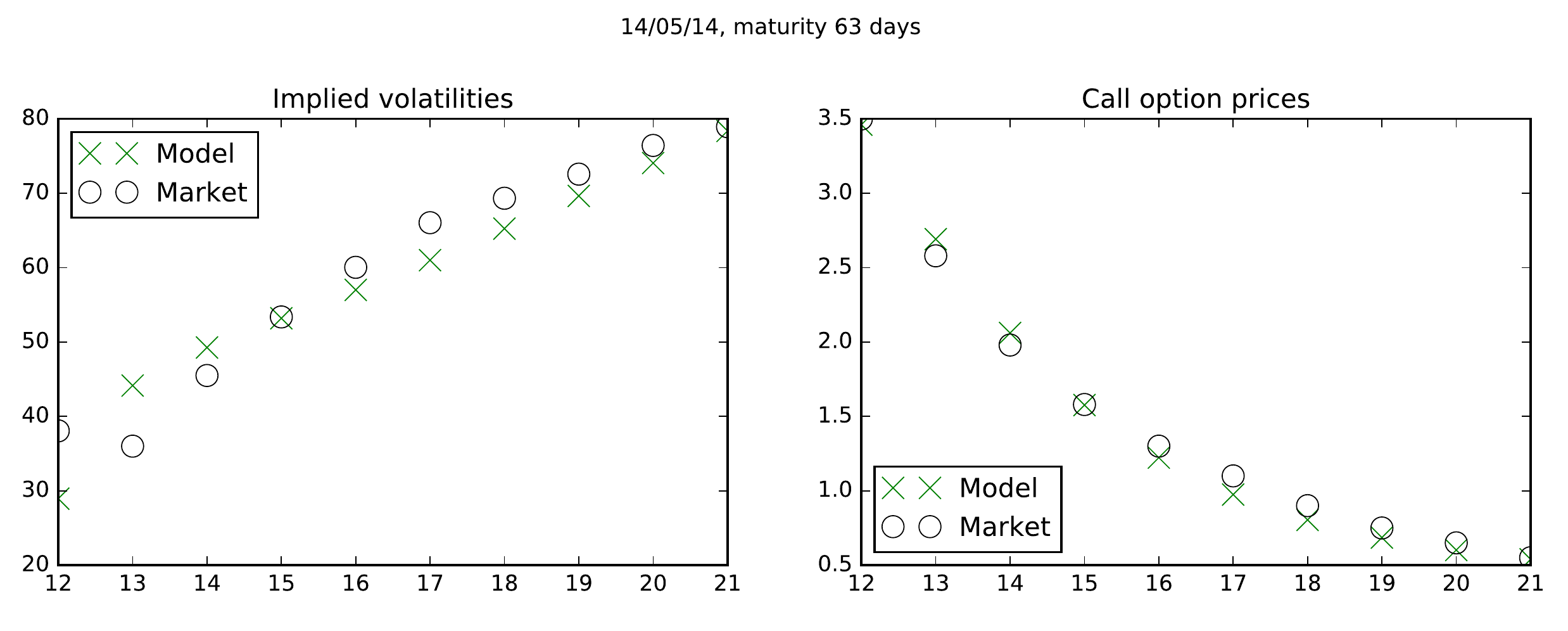}}
\centerline{\includegraphics[scale=0.4]{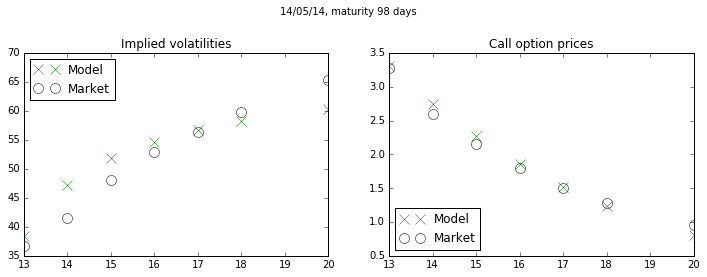}}
\caption{Option prices and implied volatilities of VIX options,
calibrated simultaneously to all three maturities.
The graphs represent the maturities $35$,
$63$ and $98$ days.}
\label{calib3.fig}
\end{figure}



\appendix

\small
\section{Appendix}

\subsection{Proof of Corollary~\ref{mcrb1.corr}}
We consider $\alpha=1$ with no loss of  generality, and assume that $H\leq \frac{1}{2}$, 
the case where $H>\frac{1}{2}$ being similar. 
The proof relies on checking the assumptions of Proposition~\ref{mc.prop}. 
Let us start with the rectangle scheme. 
On the one hand, when $t_2-T \geq 2(t_2-t_1)$, then also $t_1-T\leq \frac{t_2-T}{2} $, 
and the following estimate holds true:
\begin{align*}
&\left(\int_t^T \|g(t_2,s)-g(t_1,s)\|^2ds\right)^{1/2} = \left[\int_t^{T}
  \left(|t_{2}-s|^{H-\frac{1}{2}}-|t_1-s|^{H-\frac{1}{2}}\right)^2
  ds\right]^{1/2}\\
&\leq C \left[\int_t^{T}
  (t_{2}-t_1)^2 (t_1-s)^{2H-3}
  ds\right]^{1/2}
\leq C (t_{2}-t_1) (t_{1}-{T})^{H-1}\leq C2^{1-H} (t_{2}-t_1)
  (t_{2}-T)^{H-1}.
\end{align*}
On the other hand, when $t_2-T < 2(t_2-t_1)$, we have the bound
\begin{align*}
&\left(\int_t^T \|g(t_2,s)-g(t_1,s)\|^2ds\right)^{1/2} 
 \leq \left(\int_0^\infty \left((t_1-T+s)^{H-\frac{1}{2}} 
 - (t_2-T +s)^{H-\frac{1}{2}}\right)^2 ds\right)^{1/2}\\
&\leq \left(\int_0^\infty \left(s^{H-\frac{1}{2}} -
  (t_2-T +s)^{H-\frac{1}{2}}\right)^2 ds\right)^{1/2}
  \leq
  C(t_2-T)^H\leq 2C (t_2-t_1) (t_2-T)^{H-1}. 
\end{align*}

Consider now the trapezoidal scheme. 
On the one hand, when  $t_3-T \geq 2(t_2-t_1)$, a straightforward second-order Taylor estimate plus the
same argument as above yield the following bound:
\begin{align*}
&\left(\int_t^T 
\left\|g(t_2,s)-\frac{t_3-t_2}{t_3-t_1}g(t_1,s)-  \frac{t_2-t_1}{t_3-t_1}g(t_3,s)\right\|^2ds\right)^{1/2}\\
& = \left(\int_t^T \left|(t_2-s)^{H-\frac{1}{2}}-\frac{t_3-t_2}{t_3-t_1}(t_1-s)^{H-\frac{1}{2}}-
  \frac{t_2-t_1}{t_3-t_1}(t_3-s)^{H-\frac{1}{2}}\right|^2ds\right)^{1/2}\\
  &\leq
\left(\int_t^{T} \left|((t_1+t_3)/2-s)^{H-\frac{1}{2}} - \frac{1}{2}(t_1 - s)^{H-\frac{1}{2}} -
  \frac{1}{2}(t_3 -s)^{H-\frac{1}{2}}\right|^2 ds\right)^{1/2} \\
&\leq C (t_3-t_1)^2 (t_1-T)^{H-2} \leq C 2^{2-H} (t_3-t_1)^2 (t_3-T)^{H-2}.
\end{align*}
When $t_2-T < 2(t_2-t_1)$, this expression may be bounded as follows, completing the
proof:
$$
\left(\int_t^{T} \left|\left(\frac{t_1+t_3}{2}-s\right)^{H-\frac{1}{2}} - \frac{(t_1 - s)^{H-\frac{1}{2}}}{2}
 - \frac{(t_3 -s)^{H-\frac{1}{2}}}{2}\right|^2 ds\right)^{\frac{1}{2}}
\leq C (t_3-T)^H\leq 4C (t_3-t_1)^2 (t_3-T)^{H-2}. 
$$

\subsection{A key lemma}
\begin{lemma}\label{density.lm}
Let $(Z_t)_{0\leq t\leq 1}$ be an a.s.~continuous Gaussian process 
with mean function~$m(\cdot)$ and continuous covariance function $C(\cdot,\cdot)$, satisfying
$$
\min_{u\in [0,1]} \int_0^1 C(u,v)dv>0. 
$$ 
Let $x\in L^2([0,1])$ with $x\geq 0$. 
Then the random variable
$\int_0^1  x(u) e^{Z(u)}du$
has a density $p(\cdot)$ on $(0,\infty)$, such that $p(x)\leq c/x$ for some finite constant~$c$. 
\end{lemma}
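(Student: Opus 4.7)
The hypothesis $\min_u \int_0^1 C(u,v)dv > 0$ is precisely what is needed to isolate a one-dimensional Gaussian direction along which $Y := \int_0^1 x(u) e^{Z(u)} du$ is a monotone smooth function; a change-of-variables argument then gives the density bound. The natural direction to work with is $\xi := \int_0^1 Z(v)dv$.

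First, I would check that $\sigma^2 := \mathrm{Var}(\xi) = \int_0^1\int_0^1 C(u,v)du\,dv \geq \min_u \int_0^1 C(u,v)dv > 0$, so that $\xi$ is non-degenerate, and that the regression coefficient of $Z(u)$ on $\xi$, namely $\beta(u) := \sigma^{-2}\int_0^1 C(u,v)dv$, is uniformly bounded below: $\beta(u) \geq \beta_0 > 0$. The standard orthogonal decomposition in the Gaussian space attached to $\{Z(u)\}_{u\in[0,1]}$ then yields
\[
Z(u) = m(u) + \beta(u)(\xi - \EE\xi) + W(u),
\]
where $W$ is a centered Gaussian process independent of $\xi$ (independence follows from uncorrelatedness in the jointly Gaussian setting, and $W$ inherits almost surely continuous paths from $Z$).

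Next, I would condition on the trajectory of $W$. Conditionally, $Y = \psi(\xi - \EE\xi)$ with $\psi(t) := \int_0^1 x(u)e^{m(u)+W(u)+\beta(u)t}du$. Since $x\geq 0$ is not identically zero, $\psi$ is strictly positive, smooth and strictly increasing, with
\[
\psi'(t) = \int_0^1 \beta(u) x(u) e^{m(u)+W(u)+\beta(u)t}du \;\geq\; \beta_0\, \psi(t).
\]
Hence $\psi$ is a $C^\infty$-diffeomorphism from $\RR$ onto $(0,\infty)$, and since $\xi - \EE\xi \sim \mathcal N(0,\sigma^2)$ is independent of $W$, the change-of-variables formula yields, for $y>0$,
\[
p_{Y\mid W}(y) = \frac{\varphi_{0,\sigma^2}(\psi^{-1}(y))}{\psi'(\psi^{-1}(y))} \;\leq\; \frac{1}{\sqrt{2\pi\sigma^2}}\cdot\frac{1}{\beta_0\, y},
\]
where I used $\psi'(\psi^{-1}(y))\geq \beta_0\,\psi(\psi^{-1}(y)) = \beta_0 y$ together with $\varphi_{0,\sigma^2}\leq (2\pi\sigma^2)^{-1/2}$. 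Integrating against the law of $W$ (Fubini) produces an unconditional density satisfying $p(y)\leq c/y$ with $c = (\beta_0\sqrt{2\pi\sigma^2})^{-1}$.

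The only point needing a bit of care is the disintegration/Fubini step: one must legitimately view $W$ as a random element of $C([0,1])$ independent of $\xi$, apply the conditional change of variables pathwise, and then integrate to recover the unconditional density. This is routine once the orthogonal decomposition is set up. The substantive content of the lemma lies entirely in the quantitative lower bound $\beta(u)\geq \beta_0$ furnished by the hypothesis, which propagates to the logarithmic-derivative lower bound $\psi'/\psi \geq \beta_0$ and hence to the $1/y$ decay of the density.
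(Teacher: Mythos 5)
Your proof is correct, but it takes a genuinely different route from the paper's. The paper establishes the density bound by a discretize-and-regularize integration-by-parts argument: it works with a finite-dimensional Gaussian vector $Z^\varepsilon_{0},\dots,Z^\varepsilon_{n-1}$ (with an added small independent Gaussian noise to make the covariance nondegenerate), performs a finite-dimensional Gaussian integration by parts via the change of variables $z\mapsto z+\alpha\rho$, specializes to $\rho = \mathbf{1}^\top C^\varepsilon_n$ to cancel the inverse covariance matrix, and then passes to the limits $\varepsilon\to 0$ and $n\to\infty$. The result is the bound $\EE[X\phi'(X)]\leq \int\!\!\int C\,/\min_u\!\int C(u,v)dv$ for test functions $\phi$, from which the density estimate follows by choosing $\phi$ as a clipped logarithm. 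Your argument, by contrast, works directly in the continuum: you identify the one-dimensional Gaussian direction $\xi=\int_0^1 Z(v)dv$, perform the orthogonal decomposition $Z(u)=m(u)+\beta(u)(\xi-\EE\xi)+W(u)$ with $W$ independent of $\xi$, and then exploit the fact that conditionally on $W$ the random variable $Y=\psi_W(\xi-\EE\xi)$ is a smooth strictly monotone transformation of a one-dimensional Gaussian with $\psi_W'\geq\beta_0\psi_W$, so the conditional density is explicitly bounded by $(\beta_0\sqrt{2\pi\sigma^2}\,y)^{-1}$; Fubini then yields the unconditional bound.

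Both routes hinge on the same structural input — integrating the process against the constant direction $\mathbf 1$ (or $\int_0^1 \cdot\,dv$) and using the hypothesis to bound the resulting regression coefficient away from zero — so the substantive content is shared. What your approach buys is that it avoids the double limiting procedure and the regularization with $C^\varepsilon_n$, gives an explicit constant $c=(\beta_0\sqrt{2\pi\sigma^2})^{-1}$, and makes the role of the hypothesis transparent (it forces $\psi'/\psi\geq\beta_0$). The paper's discretize-and-integrate-by-parts route is arguably more robust in situations where an explicit orthogonal decomposition is awkward, and it presents the argument in the Malliavin-style form $\EE[X\phi'(X)]\leq c'$, but for this lemma your more elementary conditioning argument is cleaner. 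One small point worth spelling out if you write this up: the continuity of the mean function $m$ (needed for the continuity of $W$) is not stated explicitly as a hypothesis, but follows from the a.s. continuity of $Z$ together with the continuity of $C$, since a.s. convergence of a Gaussian family with vanishing variance forces convergence of the means.
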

\begin{proof}
{Let $\phi$ be a smooth bounded function with bounded derivative, denote
$t^n_i := \frac{i}{n}$ and $\zeta^n_i := \int_{t_i^n}^{t_{i+1}^n} x(u)
du$, and for $i=0,\dots,n-1$, let $Z^\varepsilon_i = Z_{t^n_i} +
\varepsilon \zeta_i$, where $(\zeta_0,\dots,\zeta_{n-1})$ is a
standard normal random vector independent from $Z$. Then,
$$
\EE\left[\phi\left(\sum_{i=0}^{n-1} \zeta^n_i e^{Z^\varepsilon_i}\right)\right]
 = \int_{\mathbb{R}^n}\phi\left(\sum_{i=0}^{n-1} \zeta^n_i e^{z_i}\right)
 \frac{\exp\left(-\frac{1}{2} (z-m_n)^\top (C^\varepsilon_n)^{-1} (z-m_n)\right)}{(2\pi)^{n/2}\sqrt{|C^\varepsilon_n|}}
 dz,
$$
with $z=(z_1,\ldots,z_n)$,
where $C^\varepsilon_n$ and $m_n$ are, respectively, the covariance matrix and the
mean vector of the Gaussian vector
$(Z^\varepsilon_{0},\dots,Z^\varepsilon_{n-1})$ ($C^\varepsilon_n$ is
clearly nondegenerate for any $\varepsilon>0$).
Making the change of variable $z_i\mapsto z_i + \alpha \rho_i$ in the
above integral, differentiating with respect to $\alpha$ and taking
$\alpha=0$, we obtain
\begin{align*}
&\int_{\mathbb{R}^n}\phi'\left(\sum_{i=0}^{n-1} \zeta^n_i  e^{z_i}\right) \sum_{i=0}^{n-1} \zeta^n_i \rho_i e^{z_i}
  \frac{\exp\left(-\frac{1}{2} (z-m_n)^\top (C^\varepsilon_n)^{-1} (z-m_n)\right)}{(2\pi)^{n/2}\sqrt{|C^\varepsilon_n|}}dz \\
& = \int_{\mathbb{R}^n}\phi\left(\sum_{i=0}^{n-1} \zeta^n_i
  e^{z_i}\right)\frac{\rho^\top (C^\varepsilon_n)^{-1} (z-m_n)}{(2\pi)^{n/2}
  \sqrt{|C^\varepsilon_n|}}\exp\left(-\frac{1}{2} (z-m_n)^\top (C^\varepsilon_n)^{-1} (z-m_n)\right)dz,
\end{align*}
or, in other words,
$$
\EE\left[\phi'(X^\varepsilon_n) \sum_{i=0}^{n-1} \zeta^n_i \rho_i e^{Z^\varepsilon_{i}}\right] = \EE\left[\phi\left(X^\varepsilon_n
  \right) \rho^\top (C^\varepsilon_n)^{-1}  (Z^\varepsilon-m_n)\right]
  $$
where we define $X^\varepsilon_n := \sum_{i=0}^{n-1} \zeta^n_i
e^{Z^\varepsilon_{i}}$.
Taking $\rho = \mathbf 1^\top C^\varepsilon_n$, we then have
$$
 \EE\left[\phi'(X^\varepsilon_n) \sum_{i=0}^{n-1} \zeta^n_i [\mathbf 1^\top C^\varepsilon_n]_i e^{Z^\varepsilon_{i}}\right]= \EE\left[\phi(X^\varepsilon_n) \mathbf 1^\top (Z^\varepsilon-m_n)\right],
$$
and applying the dominated convergence theorem, we finally get
$$
 \EE\left[\phi'(X_n) \sum_{i=0}^{n-1} \zeta^n_i [\mathbf 1^\top C_n]_i e^{Z_{t^n_i}}\right]= \EE\left[\phi(X_n) \mathbf 1^\top (Z_{t^n_\cdot}-m_n)\right],
$$
where $C_n$ is the covariance matrix of $Z_{t^n_0},\dots,
Z_{t^n_{n-1}}$, even if this matrix is degenerate. }

Assuming that $|\phi|\leq 1$, Jensen's inequality implies
$$
\EE\left[\phi'(X_n) \sum_{i=0}^{n-1} \zeta^n_i [\mathbf 1^\top C_n]_i Z_{t^n_i}\right]
 \leq \EE\left[(\mathbf 1^\top (Z_{t^n_\cdot}-m_n))^2\right]^{\frac{1}{2}}
  =(\mathbf 1^\top C_n\mathbf 1)^{\frac{1}{2}}. 
$$
Now, multiplying both sides by $\frac{1}{n}$ and passing to the limit as~$n$ tends to infinity yield
$$
\EE\left[\phi'(X) \int_0^{1}x(u)e^{Z_u}\int_{0}^{1} C(u,v) dvdu\right] \leq \int_{0}^{1} du \int_0^{1} dv C(u,v),
$$
so that
$$
\EE[X\phi'(X)] \leq \frac{\int_{0}^{1} du \int_0^{1} dv C(u,v)}{\min_u\int_{0}^{1} C(u,v) dv}. 
$$
The passage to the limit is justified by the continuity of the
covariance function and that of~$Z$, together with the dominated
convergence theorem and standard bounds on Gaussian processes. Now,
for $0<a<b<\infty$, 
choose $\phi(x) = 0$ for $x<a$, $\phi(x) =
\frac{\log(x/a)}{\log(b/a)}$ for $a\leq x<b$ and $\phi(x) = 1$ for
$b\leq x$. The above bound becomes
$$
\frac{\mathbb P[a<X<b]}{\log(b/a)} \leq \frac{\int_{0}^{1} du \int_0^{1} dv
  C(u,v}{\min_u\int_{0}^{1} C(u,v) dv},
$$
from which the statement of the lemma follows directly. 
\end{proof}

\subsection{Proof of Proposition~\ref{mc.prop}}\label{app:Proofmc.prop}
In the proof, $C$ denotes a constant, not depending on~$n$, which may change from line to line. 
By the Lipschitz property of~$f$, using the positivity of the path~$x(\cdot)$, 
\begin{align}
|F(t,x)-F_n(t,x)| & \leq  
\frac{C}{\Theta} \int_{T}^{T+\Theta} x(u) \EE\left[\left|\Ee_{t, {T}}(u) - \Ee_{t, {T}}(\eta(u))\right|\right] du,\label{rectapprox}\\
|F(t,x)-\widehat F_n(t,x)| &\leq  \frac{C}{\Theta}\int_{T}^{T+\Theta} x(u) 
\EE\left[\left|\Ee_{t, {T}}(u) - \theta^n (u)\Ee_{t,T}\left(\eta^n(u)\right) - \left(1-\theta^n(u)\right) \Ee_{t,T}(\overline{\eta}^n(u))\right|\right] du.\label{trapapprox}
\end{align}
We now estimate the expectation under the integral sign for the two approximations. 
For the rectangle approximation, for $u\in [t^n_i,t^n_{i+1})$, 
there exists $\theta\in [0,1]$ (possibly random) such that 
\begin{align*}
&\EE\left(\left|\Ee_{t, {T}}(u) - \Ee_{t, {T}}(t^n_i)\right|\right)
 = \EE\left[|Z_u - Z_{t^n_i}| e^{Z_{t^n_i} + \theta\left(Z_u - Z_{t^n_i}\right)}\right]\\ 
& = 2 \EE\left[\left|\int_t^{T} (g(u,s)-g(t^n_i,s))^\top dW_s\right|e^{Z_{t^n_i} + \theta(Z_u - Z_{t^n_i})}\right]  + 2 \left|\int_t^{T} (\|g(u,s)\|^2 - \|g(t^n_i,s)\|^2) ds\right| 
 \EE\left[e^{Z_{t^n_i} + \theta(Z_u - Z_{t^n_i})}\right]\\
& \leq 2 \EE\left[\int_t^{T} \|g(u,s)-g(t^n_i,s)\|^2 ds\right]^{1/2} 
\EE\left[e^{2Z_{t^n_i} + 2\theta(Z_u -  Z_{t^n_i})}\right]^{1/2}  + 2 \left|\int_t^{T} (\|g(u,s)\|^2 - \|g(t^n_i,s)\|^2) ds\right|
 \EE\left[e^{Z_{t^n_i} + \theta(Z_u - Z_{t^n_i})}\right]\\
&\leq  2 \EE\left[\int_t^{T} \|g(u,s)-g(t^n_i,s)\|^2 ds\right]^{1/2} \EE\left(e^{2 M}\right)^{1/2}
+ 2 \left|\int_t^{T} (\|g(u,s)\|^2 - \|g(t^n_i,s)\|^2) ds\right| \EE\left(e^{M}\right),
\end{align*}
{with~$Z$ defined in~\eqref{eq:ZProcess}, }
where $M := \max_{{T}\leq u\leq T+\Theta } Z_u$. 
By classical results on the supremum of Gaussian processes~\cite[Section 2.1]{adler2007random}, 
$M$ admits all exponential moments, and therefore 
$$
\EE[|\Ee_{t,{T}}(u) - \Ee_{t,{T}}(t^n_i)|]
  \leq C \EE\left[\int_t^{T} \|g(u,s)-g(t^n_i,s)\|^2 ds\right]^{1/2}
   + C\left|  \int_t^{T} (\|g(u,s)\|^2 - \|g(t^n_i,s)\|^2) ds\right|.
$$
The second term above can be further estimated as 
\begin{align*}
\left|
  \int_t^{T} (\|g(u,s)\|^2 - \|g(t^n_i,s)\|^2) ds\right|  & \leq
\int_t^{T} \|g(u,s)-g(t^n_i,s)\|^2 ds + 2 \int_t^T\|g(u,s)-g(t^n_i,s)\|\|g(t^n_i,s)\|ds\\
&\leq c^2 (u-t^n_i)^2 (u-T)^{2\beta-2}
 + 2C \left[\int_t^{T}  \|g(u,s)-g(t^n_i,s)\|^2 ds\right]^{\frac{1}{2}}
 \left[\int_t^{T}\|g(t^n_i,s)\|^2 ds\right]^{\frac{1}{2}} \\
&\leq C (t^n_{i+1}-t^n_i) (u-T)^{\beta-1},
\end{align*}
where in the last estimate we have used Assumption~\eqref{intg.eq}. Finally, 
$$
\EE[|\Ee_{t,{T}}(u) - \Ee_{t,{T}}(t_i)|]   \leq C (t^n_{i+1}-t^n_i) (u-T)^{\beta-1},
$$
and the proposition follows from the integrability of $(u-T)^{\beta-1}$ on $[T,T+\Theta]$ and the boundedness of $x$. 
For the trapezoidal approximation, for $u\in[t^n_i,t^n_{i+1})$, 
\begin{align*}
& \EE[|\Ee_{t, {T}}(u) - \theta^n (u)\Ee_{t,T}(t^n_i) - (1-\theta^n(u)) \Ee_{t,T}(t^n_{i+1})|]  \\
& \leq \EE\left[\left|e^{Z_u} - e^{\theta^n (u)Z_{t^n_i} + (1-\theta^n(u))
  Z_{t^n_{i+1}}}\right|\right]
   +\EE\left[\left| e^{\theta^n (u)Z_{t^n_i} + (1-\theta^n(u))
  Z_{t^n_{i+1}}}- \theta^n (u)e^{Z_{t^n_i}} - (1-\theta^n(u)) e^{Z_{t^n_{i+1}}}\right|\right] \\
& \leq \EE\left[\left|Z_u - \theta^n (u)Z_{t^n_i} - (1-\theta^n(u))  Z_{t^n_{i+1}}\right| 
e^{\theta_1 Z_u + \theta_2 Z_{t^n_i} + \theta_3  Z_{t^n_{i+1}}}\right]
 + C\EE\left[e^{Z_{t^n_i} + |Z_{t^n_{i+1}}-Z_{t^n_i}|} \left|Z_{t^n_{i+1}}-Z_{t^n_i}\right|^2\right]
\end{align*}
for some (possibly random) $\theta_1,\theta_2,\theta_3\geq 0$ with
$\theta_1+\theta_2+\theta_3 = 1$. Cauchy-Schwarz inequality
and the exponential integrability of the supremum of Gaussian processes then yield
\begin{align*}
&\EE[|\Ee_{t, {T}}(u) - \theta^n (u)\Ee_{t,T}(t^n_i) - (1-\theta^n(u)) \Ee_{t,T}(t^n_{i+1})|]  \\
&\leq C E\left[\left|Z_u - \theta^n (u)Z_{t^n_i} - (1-\theta^n(u))  Z_{t^n_{i+1}}\right|^2\right]^{1/2}
 + C \EE\left[\left|Z_{t_{i+1}}-Z_{t_i}\right|^4\right]^{1/2}\\
& \leq C \left(\int_t^{T} \|g(u,s) - \theta^n (u)g(t^n_i, s) -
  (1-\theta^n(u))g(t^n_{i+1},s) \|^2 ds\right)^{1/2} \\
   &+ C\Big|\int_t^{T} (\|g(u,s)\|^2 - \theta^n (u)\|g(t^n_i,s)\|^2 -
  (1-\theta^n(u))\|g(t^n_{i+1},s)\|^2 )ds\Big|\\
 & +  C \int_t^{T} \|g(t^n_i,s) -  g(t^n_{i+1},s) \|^2 ds + C \Big|\int_t^{T} (\|g(t^n_i,s)\|^2 -
   \|g(t^n_{i+1},s)\|^2  )ds\Big|^{2}\\
 & \leq C(t^{n}_{i+1} -t^n_i)^2 (t^{n}_{i+1}-T)^{\beta-2} +
   C(t^{n}_{i+1} -t^n_i)^2 (t^{n}_{i+1}-T)^{2\beta-2}\\
&+ C  \Big|\int_t^{T} (\|g(u,s)\|^2 - \theta^n (u)\|g(t^n_i,s)\|^2 - (1-\theta^n(u))\|g(t^n_{i+1},s)\|^2 )ds\Big|
\end{align*}
where, for a centered Gaussian random variable $X$, $\EE[X^4] = 3 \EE[X^2]^2$, 
and we used the estimate of the first part of the proof. The remaining term is estimated as 
\begin{align*}
&\Big|\|g(u,s)\|^2 - \theta^n (u)\|g(t^n_i,s)\|^2 -
  (1-\theta^n(u))\|g(t^n_{i+1},s)\|^2 \Big|\\
&\leq \|g(u,s) -\theta^n (u) g(t^n_i,s) -(1-\theta^n(u))g(t^n_{i+1},s) \|^2\\
&+ 2 \|g(u,s) -\theta^n (u) g(t^n_i,s) -(1-\theta^n(u))g(t^n_{i+1},s) \| \|\theta^n (u) g(t^n_i,s) +(1-\theta^n(u))g(t^n_{i+1},s)\|\\
&+ \Big|\|\theta^n (u) g(t^n_i,s) +(1-\theta^n(u))g(t^n_{i+1},s)\|^2 - \theta^n (u)\|g(t^n_i,s)\|^2 -
  (1-\theta^n(u))\|g(t^n_{i+1},s)\|^2 \Big|.
\end{align*}
Each of the three terms can now be estimated similarly to the first
part of the proof, leading to the conclusion
$$
\EE\left[\left|\Ee_{t, {T}}(u) - \theta^n (u)\Ee_{t,T}(t^n_i) - (1-\theta^n(u)) \Ee_{t,T}(t^n_{i+1})\right|\right]  
\leq C(t^{n}_{i+1} -t^n_i)^2 (t^{n}_{i+1}-T)^{\beta-2}. 
$$
Using the boundedness of $x(u)$, estimating the discretisation error now boils down to computing
$$
\sum_{i=1}^n (t^n_{i+1}-t^n_i)^3(t^n_{i+1} - T)^{H-2}.
$$
Letting ${\Theta} = 1$ without loss of generality, and substituting the
expression for $t^n_i$, this becomes
$$
\sum_{i=1}^n
\left[\left(\frac{i+1}{n}\right)^\kappa-\left(\frac{i}{n}\right)^\kappa\right]^3\left(\frac{i+1}{n}\right)^{\kappa(H-2)}
\leq  C\left\{n^{-\kappa(H+1)}\sum_{i=1}^n (i+1)^{\kappa(H+1)-3} + n^{-\kappa(H+1)} \right\}.
$$
When $\kappa(H+1)>2$, the sum in the right-hand side explodes at the
rate $n^{\kappa(H+1)-2}$, and therefore the entire right-hand side is
bounded by $C/n^{2}$. 

\subsection{Proof of Proposition~\ref{mc2.prop}}
In the proof, $C$ denotes a constant, not depending on $n$, which may
change from line to line.  We drop the superscript $(t,\gamma)$ whenever this does not cause confusion. Similarly to the proof of Proposition~\ref{mc.prop}, we need to estimate the expectation under the integral
sign for~\eqref{rectapprox} and~\eqref{trapapprox}. 
We denote
$$
\Ee_{t,T}(u) = \exp\left(Z_{t,T}(u)\right):= \exp\left(Z^1_{t,T}(u) + Z^2_{t,T}(u)\right),
$$
with 
$$
Z^1_{t,T}(u):= 2\int_t^T  \sqrt{\Gamma_s} g(u-s)^\top  dW_s
$$
and 
$$
Z^2_{t,T}(u)= \psi(u-T)\Gamma_T +
  \phi(u-T)-\psi(u-t)\gamma-\phi(u-t).
$$

\paragraph{Part i.} By Taylor formula,
\begin{align*}
&|\Ee_{t, {T}}(u) - \Ee_{t, {T}}(t^n_i)| \leq |Z_{t,T}(u)
- Z_{t,T}(t^n_i)|e^{Z_{t,T}(u)} +|Z_{t,T}(u)
- Z_{t,T}(t^n_i)|e^{Z_{t,T}(t^n_i)}.
\end{align*}
Let us focus, for example, on the first term; the second one can be dealt with in a similar manner. First, by taking a conditional expectation and using the positivity of $\psi$ and $\phi$, 
$$
\EE[|\Ee_{t, {T}}(u) - \Ee_{t, {T}}(t^n_i)|] \leq
\EE\left[|Z_{t,T}(u) - Z_{t,T}(t^n_i)|
\exp\left(Z^1_{t,T}(u) + \int_T^u \Gamma_s \|g(u-s)\|^2 ds\right)\right]
$$ 
Using the triangle inequality, the Cauchy-Schwarz inequality, and the
It\^o isometry we
then get
\begin{align}
\EE\left[|\Ee_{t, {T}}(u) - \Ee_{t, {T}}(t^n_i)|\right]
 & \leq \left(\EE\left[|Z^1_{t,T}(u) - Z^1_{t,T}(t^n_i)|^2\right]^{1/2}
  +\EE\left[|Z^2_{t,T}(u) - Z^2_{t,T}(t^n_i)|^2\right]^{1/2} \right)\notag\\ 
&\times \EE\left[\exp\left(4\int_t^T \Gamma_s \|g(u-s)\|^2 ds+ 2\int_T^u \Gamma_s \|g(u-s)\|^2 ds\right)\right]^{1/2}
\notag\\
&\leq C \EE[|Z^1_{t,T}(u)
- Z^1_{t,T}(t^n_i)|^2]^{1/2} + C\EE[|Z^2_{t,T}(u)
- Z^2_{t,T}(t^n_i)|^2]^{1/2},  
\label{2factors}
\end{align}
because the last factor is bounded uniformly on $u$ by assumptions of
this proposition. 

The first summand above satisfies
\begin{align*}
&\EE[|Z^1_{t,T}(u)
- Z^1_{t,T}(t^n_i)|^2]^{\frac{1}{2}}\leq 2\EE\left[\int_t^T
  \Gamma_s \|g(u-s)-g(t^n_i-s)\|^2 ds\right]^{\frac{1}{2}} \\&\leq 2\sqrt{\EE[\max_{t\leq s\leq T}\Gamma_s]} \sqrt{\int_t^T
  \|g(u-s)-g(t^n_i-s)\|^2 ds} \leq C\sqrt{\int_t^T
  \|g(u-s)-g(t^n_i-s)\|^2 ds}.
\end{align*}
The contribution of this term to the global error is of order $\frac{1}{n}$ 
as in Proposition~\ref{mc.prop}. 
To estimate the second summand, remark that by Proposition~\ref{affine.prop}, it follows that
$$
|\psi(u-t) - \psi(t^n_i-t)|\leq C |u-t^n_i| + 2\int_{t^n_i-t}^{u-t}
\|g(s)\|^2 ds\quad \text{and}\quad |\phi(u-t) - \phi(t^n_i-t)|\leq C |u-t^n_i|,
$$
for some constant $C<\infty$, and therefore,
$$
\EE\left[\left|Z^2_{t,T}(u) - Z^2_{t,T}(t^n_i)\right|^2\right]^{\frac{1}{2}}
\leq C\left(|u-t^n_i| +\int_{t^n_i-t}^{u-t}
\|g(s)\|^2 ds +\int_{t^n_i-T}^{u-T}
\|g(s)\|^2 ds\right)
$$
Remark also that 
$$
\int_{t^n_i}^{t^n_{i+1}} du \int_{t^n_i-T}^{u-T}
\|g(s)\|^2 ds = \int_{t^n_i}^{t^n_{i+1}} ds (t^n_{i+1}-u) \|g(s-T)\|^2
ds\leq \frac{\Theta}{n} \int_{t^n_i}^{t^n_{i+1}} ds\|g(s-T)\|^2
ds.
$$
Therefore, the statement of the proposition for the rectangle scheme
follows from the integrability of $\|g(s)\|^2$.

\paragraph{Part ii.} For the trapezoidal approximation, by Taylor formula, 
for $u\in[t^n_i,t^n_{i+1})$, 
\begin{align*}
&|\Ee_{t, {T}}(u) -     \theta^n (u)\Ee_{t,T}(t^n_i) - (1-\theta^n(u)) \Ee_{t,T}(t^n_{i+1})| \\
& \leq \left|e^{Z_u} - e^{\theta^n (u)Z_{t^n_i} + (1-\theta^n(u))Z_{t^n_{i+1}}} \right|
+ \left| e^{\theta^n (u)Z_{t^n_i} + (1-\theta^n(u)) Z_{t^n_{i+1}}}- 
    \theta^n (u)e^{Z_{t^n_i}} - (1-\theta^n(u)) e^{Z_{t^n_{i+1}}}\right|\\
& \leq\left(\left|Z_u - \theta^n (u)Z_{t^n_i} - (1-\theta^n(u))
  Z_{t^n_{i+1}}\right|  + \left|Z_{t_{i+1}}-Z_{t_i}\right|^2\right)
  \left(e^{Z_u} + e^{Z_{t^n_i}}+ e^{Z_{t^n_{i+1}}}\right).
\end{align*}
Similarly to the first part of the proof, we can then show using the
Cauchy-Schwarz inequality that 
\begin{align}
&\EE\left[\left|\Ee_{t, {T}}(u) - 
    \theta^n (u)\Ee_{t,T}(t^n_i) - (1-\theta^n(u)) \Ee_{t,T}(t^n_{i+1})\right|\right] \notag\\&
  \leq C \EE\left[\left|Z^1_u - \theta^n (u)Z^1_{t^n_i} - (1-\theta^n(u))
  Z^1_{t^n_{i+1}}\right|^2\right]^{1/2} + C \EE\left(|Z^1_{t_{i+1}}-Z^1_{t_i}|^4\right)^{1/2}
  \label{1stline}\\
& + C\EE\left[\left|Z^2_u - \theta^n (u)Z^2_{t^n_i} - (1-\theta^n(u))
  Z^2_{t^n_{i+1}}\right|^2\right]^{1/2}
   + C \EE\left(\left|Z^2_{t_{i+1}}-Z^2_{t_i}\right|^4\right)^{1/2}\label{2ndline}
\end{align}
The two terms in~\eqref{1stline} are estimated using It\^o isometry:
\begin{align*}
\EE\left[\left|Z^1_u - \theta^n (u)Z^1_{t^n_i} - (1-\theta^n(u))
  Z^1_{t^n_{i+1}}\right|^2\right]^{\frac{1}{2}} 
  & \leq \EE\left[\int_t^T\Gamma_s \left\|
  g(u-s) - \theta^n(u) g(t^n_i - s) - (1-\theta^n_u) g(t^n_{i+1}-s)\right\|^2 ds\right]^{\frac{1}{2}}\\
&\leq\sqrt{\max_{t\leq s\leq T} \EE[\Gamma_s]}
  (t^n_{i+1}-t^n_i)^2 (t^{n}_{i+1}-T)^{\beta-2},
\end{align*}
and similarly
\begin{align*}
\EE\left(|Z^1_{t_{i+1}}-Z^1_{t_i}|^4\right)^{1/2}
 & \leq 3\sqrt{\max_{t\leq s\leq T} \EE[\Gamma^2_s]}  \EE\left[\left(\int_t^T \|g(t^n_{i+1}-s) - g(t^n_i-s)\|^2 ds\right)^2\right]^{1/2}\\
& \leq C\sqrt{\max_{t\leq s\leq T} \EE[\Gamma^2_s]}   (t^n_{i+1}-t^n_i)^2(t^{n}_{i+1}-T)^{2\beta-2}.
\end{align*}
The contribution of these terms to the final error estimate is
therefore the same as in Proposition~\ref{mc.prop}. 

It remains to estimate the contribution of the terms in~\eqref{2ndline}. 
From Proposition~\ref{affine.prop}, both $\phi''$ and~$\bar{\phi}_0''$ 
(introduced in the proof of Proposition~\ref{affine.prop}) are bounded on $[0,T]$. 
Therefore
\begin{align*}
\left|Z^2_u - \theta^n (u)Z^2_{t^n_i} - (1-\theta^n(u))  Z^2_{t^n_{i+1}}\right|
   & \leq C(1+\gamma + \Gamma_T ) (t^n_{i+1}-t^n_i)^2 \\
   &+ \Gamma_T|G(u-T) - \theta^n(u) G(t^n_i-T) -  (1-\theta^n(u))G(t^n_{i+1}-T)|\\
   & + \gamma|G(u-t) - \theta^n(u) G(t^n_i-t) - (1-\theta^n(u))G(t^n_{i+1}-t)|.
\end{align*}
Since $\|g\|$ is decreasing, $G$ is concave and for $i\geq 1$
\begin{align*}
&\int_{t^n_i}^{t^n_{i+1}}|G(u-T) - \theta^n(u) G(t^n_i-T) -
  (1-\theta^n(u))G(t^n_{i+1}-T)|du\\
& = \int_{t^n_i}^{t^n_{i+1}}\left(\int_{t^n_i }^{u} \|g(s-T)\|^2 ds
  -  (1-\theta^n(u)) \int_{t^n_i }^{t^n_{i+1}} \|g(s-T)\|^2
  ds\right)du\\
&= \int_{t^n_i}^{t^n_{i+1}} 
  \|g(s-T)\|^2 \left(\frac{t^n_{i+1}+t^n_i}{2}-s\right)ds\\
& \leq  \|g(t^n_i-T)\|^2\int_{t^n_i}^{\frac{t^n_{i+1}+t^n_i}{2}} 
  \left(\frac{t^n_{i+1}+t^n_i}{2}-s\right)ds
  - \|g(t^n_{i+1}-T)\|^2\int_{\frac{t^n_{i+1}+t^n_i}{2}}^{t^n_{i+1}} 
  \left(s-\frac{t^n_{i+1}+t^n_i}{2}\right)ds\\
 & = \frac{(t^n_{i+1}-t^n_i)^2}{8}
  \left(\|g(t^n_i-T)\|^2 -\|g(t^n_{i+1}-T)\|^2\right)
 \leq C (t^n_{i+1}-t^n_i)^3
  (t^n_i-T)^{\beta-2} \leq C' (t^n_{i+1}-t^n_i)^3
  (t^n_{i+1}-T)^{\beta-2},
\end{align*}
where $C'$ is a different constant. On the other hand, for $i=0$, the
same inequality is obtained from the bound on $\|g\|$. Finally,
\begin{align*}
|Z^2_{t_{i+1}}-Z^2_{t_i}|&\leq C(1+\gamma+\Gamma_T)(t^n_{i+1}-t^n_i)
+\Gamma_T|G(t^n_i-T)-G(t^n_{i+1}-T)|
 + \gamma|G(t^n_i-t)-G(t^n_{i+1}-t)|\\
& = C(1+\gamma+\Gamma_T)(t^n_{i+1}-t^n_i)
+ \Gamma_T\int_{t^n_i}^{t^n_{i+1}} \|g(s-T)\|^2 ds
+\gamma\int_{t^n_i}^{t^n_{i+1}} \|g(s-t)\|^2  ds. 
\end{align*}
Using the bound on $g$ and the integrability of $\Gamma_T^4$, 
treating once again separately the case $i=0$, we find
$$
\EE\left(|Z^2_{t_{i+1}} - Z^2_{t_i}|^4\right)^{\frac{1}{2}} \leq C (t^n_{i+1}-t^n_i)^2(t^n_{i+1}-T)^{2\beta-2},
$$
so that the terms in~\eqref{2ndline} have the same contribution to the
error as the terms in~\eqref{1stline}, and the proof follows. 

\begin{thebibliography}{99}

\bibitem{Larsson} 
{\sc E. Abi Jaber, M. Larsson and S. Pulido.}
{\em Affine Volterra processes.}
\href{https://arxiv.org/abs/1708.08796}{arXiv:1708.08796}, 2017.

\bibitem{adler2007random}
{\sc R.~J. Adler and J.~E. Taylor}.
 {\em Random fields and geometry}. 
Springer,  2007.

\bibitem{Akdogan}
{\sc O. Akdogan}
{\em Variance curve models: Finite dimensional realisations and beyond}.
PhD Thesis, ETH Z\"urich, \href{https://www.research-collection.ethz.ch/bitstream/handle/20.500.11850/156070/eth-50162-02.pdf?sequence=2&isAllowed=y}{research-collection.ethz.ch} 2016.

\bibitem{ALV07}
{\sc E. Al\`os, J. Le\'on and J. Vives}. 
{\em On the short-time behavior of the implied volatility for jump-diffusion models with stochastic volatility}. 
Finance and Stochastics, {\tt 11}(4), 571-589, 2007.

\bibitem{bargourleipold}
{\sc C. Bardgett, E. Gourier and M. Leippold}.
{\em Inferring volatility dynamics and risk premia from the S\&P 500 and VIX markets}. 
Forthcoming in Journal of Financial Economics.

\bibitem{BSS}
{\sc O.E. Barndorff-Nielsen and J. Schmiegel}. 
{\em Brownian semistationary processes and volatility/intermittency}.
In H. Albrecher, W. Runggaldier, and W. Schachermayer, editors, Advanced Financial Modelling: 1-26. 
Walter de Gruyter, Berlin, 2009.

\bibitem{bayer2016pricing}
{\sc C.~Bayer, P.~Friz, and J.~Gatheral}. 
{\em Pricing under rough volatility}.
Quantitative Finance, {\tt 16}: 887-904, 2016.

\bibitem{BayerRoughSkew}
{\sc C.~Bayer, P. K. Friz, A.~Gulisashvili, B.~Horvath and B.~Stemper}.
{\em Short-time near-the-money skew in rough fractional volatility models}.
Forthcoming in Quantitative Finance, 2019.

\bibitem{BayerDMR}
{\sc C. ~Bayer, J. Gatheral, M. Karlsmark}
{\em Fast Ninomiya-Victoir calibration of the Double-Mean-Reverting Model}.
Quantitative Finance, {\tt 13}(11): 1813-1829, 2013. 

\bibitem{bennedsen2017hybrid}
{\sc M.~Bennedsen, A.~Lunde, and M.~S. Pakkanen}.
{\em Hybrid scheme for Brownian semistationary processes}.
Finance and Stochastics, {\tt 21}: 931-965, 2017.

\bibitem{BergomiSmileDynamicsII}
{\sc L. Bergomi}.
{\em Smile Dynamics II}.
Risk, {\tt October}: 67-73, 2005.

\bibitem{BergomiSmileDynamicsIII}
{\sc L. Bergomi}.
{\em Smile Dynamics III}.
Risk, {\tt October}: 90-96, 2008.

\bibitem{Buhler}
{\sc H. B\"uhler.} 
{\em Volatility markets: Consistent modeling, hedging, and practical implementation of variance swap market models.} PhD Thesis, Technischen Universit\"at Berlin, 2006.

\bibitem{carrmadan1}
{\sc P. Carr and D. Madan}.
{\em Towards a theory of volatility trading}, Risk: 417-427, 1998.

\bibitem{carrmadan2}
{\sc P. Carr and D. Madan}.
{\em Joint modeling of VIX and SPX options at a single and common maturity with risk management
applications}.
IIE Transactions, {\tt 46}(11): 1125-1131, 2014.

\bibitem{daprato2014stochastic}
{\sc G.~Da~Prato and J.~Zabczyk}.
{\em Stochastic equations in infinite dimensions, 2nd Edition}.
Cambridge University Press, 2014.

\bibitem{duffie2003affine}
{\sc D.~Duffie, D.~Filipovi{\'c} and W.~Schachermayer}.
{\em Affine processes and applications in Finance}.
Annals of Applied Probability, {\tt 13}(3): 984-1053, 2003.

\bibitem{omar2016microstructural}
{\sc O.~El~Euch, M.~Fukasawa and M.~Rosenbaum}.
{\em The microstructural  foundations of leverage effect and rough volatility}.
Finance and Stochastics, {\tt 22}(2): 241-280, 2018.

\bibitem{euch2016characteristic}
{\sc O.~El Euch and M.~Rosenbaum}.
{\em The characteristic function of rough Heston models}.
Mathematical Finance, {\tt 29}(1): 3-38, 2019.

\bibitem{Euch1}
{\sc O. El Euch and M. Rosenbaum}.
{\em Perfect hedging in rough Heston models}.
The Annals of Applied Probability, {\tt 28}(6): 3813-3856, 2018.

\bibitem{Fukasawa}
{\sc M. Fukasawa}.
{\em Short-time at-the-money skew and rough fractional volatility}.
Quantitative Finance, {\tt 17}(2): 189-198, 2017.

\bibitem{Fukasawatalk}
{\sc M. Fukasawa}.
{\em Hedging and Calibration for Log-normal Rough Volatility Models}.
Presentation, Jim Gatheral's 60th Birthday Conference, New York, 2017.

\bibitem{gatheralslides}
{\sc J.~Gatheral}.
{\em Consistent modelling of SPX and VIX options}.
Presentation, Bachelier Congress, London, 2008.

\bibitem{gatheral2014volatility}
{\sc J.~Gatheral, T.~Jaisson, and M.~Rosenbaum}.
{\em Volatility is rough}.
Quantitative Finance, {\tt 18}(6): 933-949, 2018.

\bibitem{GKR18}
{\sc J. Gatheral and M. Keller-Ressel}.
{\em Affine forward variance models}.
\href{https://papers.ssrn.com/sol3/papers.cfm?abstract_id=3105387}{SSRN:3105387}, 2018.

\bibitem{grad}
{\sc I.~S. Gradshteyn and I.~M. Ryzhik}.
{\em Table of integrals, series, and products}.
Academic press, 2014.

\bibitem{GJRSHeston}
{\sc H. Guennoun, A. Jacquier P. Roome and F. Shi}
{\em Asymptotic behaviour of the fractional Heston model}. 
\textit{SIAM Journal on Financial Mathematics}, {\tt 9}(3), 1017-1045, 2018.

\bibitem{GuyonNutz}
{\sc J. Guyon, R. Menegaux and M. Nutz}.
{\em Bounds for VIX Futures given S\&P 500 Smiles } 
Finance \& Stochastics, {\tt 21}(3): 593-630, 2017.

\bibitem{martiniessvi}
{\sc S. Hendriks and C. Martini}.
{\em The extended SSVI volatility surface}.
\href{https://papers.ssrn.com/sol3/papers.cfm?abstract_id=2971502}{SSRN:2971502}, 2017.

\bibitem{hida93gaussian}
{\sc T.~Hida and M.~Hitsuda}
{\em Gaussian Processes}. American Mathematical Society, 1993. 

\bibitem{HJL2017}
{\sc B. Horvath, A. Jacquier and C. Lacombe}.
{\em Asymptotic behaviour of randomised fractional volatility models}.
\href{https://arxiv.org/abs/1708.01121}{arXiv:1708.01121}, 2017.

\bibitem{HJMTrees}
{\sc B. Horvath, A. Jacquier and A. Muguruza}.
{\em Functional central limit theorems for rough volatility}.
\href{https://arxiv.org/abs/1711.03078}{arXiv:1711.03078}, 2017.

\bibitem{jacod2013limit}
{\sc J.~Jacod and A.~Shiryaev}.
{\em Limit theorems for stochastic processes}.
Springer Science \& Business Media, {\tt 288}, 2013.

\bibitem{jacquier2017vix}
{\sc A.~Jacquier, C.~Martini and A.~Muguruza}.
{\em On VIX Futures in the rough Bergomi model}.
Quantitative Finance, {\tt 18}(1): 45-61, 2018.

\bibitem{kallsen2010exponentially}
{\sc J.~Kallsen and J.~Muhle-Karbe}.
{\em Exponentially affine martingales, affine measure changes and exponential moments of affine processes}.
Stochastic Processes and their Applications, {\tt 120}: 163-181, 2010.

\bibitem{Kemna}
{\sc A.G.Z Kemna and A.C.F. Vorst}. 
{\em A pricing method for options based on average asset values}. 
Journal of Banking and Finance, {\tt 14}(1): 113-129, 1990.

\bibitem{Mandelbrot}
{\sc B. Mandelbrot and J.W. van Ness}. 
{\em Fractional Brownian motions, fractional noises and applications}.
SIAM Review, {\tt 10}(4): 422-437, 1968.

\bibitem{mccrickerd2017turbocharging}
{\sc R.~McCrickerd and M.~S. Pakkanen}.
{\em Turbocharging Monte Carlo pricing for the rough Bergomi model}.
Quantitative Finance, {\tt 18}(11): 1877-1886, 2018.

\bibitem{OuldAly}
{\sc S. M. Ould Aly}.
{\em Forward variance dynamics: Bergomi's model revisited}.
Applied Mathematical Finance, {\tt 21}(1): 87-104, 2013.

\bibitem{protter}
{\sc P.~Protter}.
{\em Stochastic Integration and Differential Equations.}
Springer-Verlag New York, 2014.

\end{thebibliography}
\end{document}